  \providecommand\BibTeX{{%
      \normalfont B\kern-0.5em{\scshape i\kern-0.25em b}\kern-0.8em\TeX}}}
\declaretheoremstyle[
spaceabove=6pt,
spacebelow=6pt,
headfont=\normalfont\bfseries,
notefont=\mdseries\bfseries,
notebraces={(}{)},
bodyfont=\normalfont\itshape,
postheadspace=1em,
headpunct={:}]{mystyle}
\newenvironment{longver}{}{}
\newenvironment{shortver}{}{}
\begin{document}
\fancyhead{}
\title{Statistically Significant Pattern Mining with Ordinal Utility}

\author{Thien Q. Tran}
\affiliation{%
  \institution{University of Tsukuba, Riken AIP}
}
\email{thientquang@mdl.cs.tsukuba.ac.jp}

\author{Kazuto Fukuchi}
\affiliation{%
  \institution{University of Tsukuba, Riken AIP}
}
\email{fukuchi@cs.tsukuba.ac.jp}

\author{Youhei Akimoto}
\affiliation{%
  \institution{University of Tsukuba, Riken AIP}
}
\email{akimoto@cs.tsukuba.ac.jp}

\author{Jun Sakuma}
\affiliation{%
  \institution{University of Tsukuba, Riken AIP}
}
\email{jun@cs.tsukuba.ac.jp}

\renewcommand{\shortauthors}{Thien Q. Tran, et al.}

\begin{abstract}
  Statistically significant patterns mining (SSPM) is an essential and challenging data mining task in the field of knowledge discovery in databases (KDD), in which each pattern is evaluated via a hypothesis test. Our study aims to introduce a preference relation into patterns and to discover the most preferred patterns under the constraint of statistical significance, which has never been considered in existing SSPM problems. We propose an iterative multiple testing procedure that can alternately reject a hypothesis and safely ignore the hypotheses that are less useful than the rejected hypothesis. One advantage of filtering out patterns with low utility is that it avoids consumption of the significance budget by rejection of useless (that is, uninteresting) patterns. This allows the significance budget to be focused on useful patterns, leading to more useful discoveries.

We show that the proposed method can control the familywise error rate (FWER) under certain assumptions, that can be satisfied by a realistic problem class in SSPM.\@We also show that the proposed method always discovers a set of patterns that is at least equally or more useful than those discovered using the standard Tarone-Bonferroni method SSPM.\@Finally, we conducted several experiments with both synthetic and real-world data to evaluate the performance of our method. As a result, in the experiments with real-world datasets, the proposed method discovered a larger number of more useful patterns than the existing method for all five conducted tasks.

\end{abstract}

\begin{CCSXML}
<ccs2012>
<concept>
<concept_id>10002951.10003227.10003351.10003443</concept_id>
<concept_desc>Information systems~Association rules</concept_desc>
<concept_significance>500</concept_significance>
</concept>
<concept>
<concept_id>10002950.10003648.10003662.10003666</concept_id>
<concept_desc>Mathematics of computing~Hypothesis testing and confidence interval computation</concept_desc>
<concept_significance>300</concept_significance>
</concept>
</ccs2012>
\end{CCSXML}

\ccsdesc[500]{Information systems~Association rules}
\ccsdesc[300]{Mathematics of computing~Hypothesis testing and confidence interval computation}

\keywords{Significant pattern mining; multiple testing; high-utility pattern}



\maketitle
\newcommand{\p}[1]{\text{Pr}\big[#1\big]}
\newcommand{\cp}[2]{\text{Pr}\big[#1\text{ }|\text{ } #2\big]}
\newcommand{\ev}[2]{\text{E}_{#1}\big[#2\big]}
\newcommand{\e}[1]{\text{E}\big[#1\big]}
\newcommand{\cm}{, \text{ }}
\newcommand{\eref}[1]{(\ref{#1})}
\newcommand{\nn}{\nonumber}
\newcommand{\mbf}{\mathbf}
\newcommand{\fa}{\text{ for all }}
\newcommand{\cond}{\text{ } | \text{ }}

\newcommand{\mysection}{\section}
\newcommand{\mysubsection}{\subsection}
\newcommand{\mysubsubsection}{\subsubsection}
\newcommand{\mysubsubsubsection}[1]{\textbf{#1} \\}

\begin{figure}[t]
    \centering
    \includegraphics[width=0.65\linewidth]{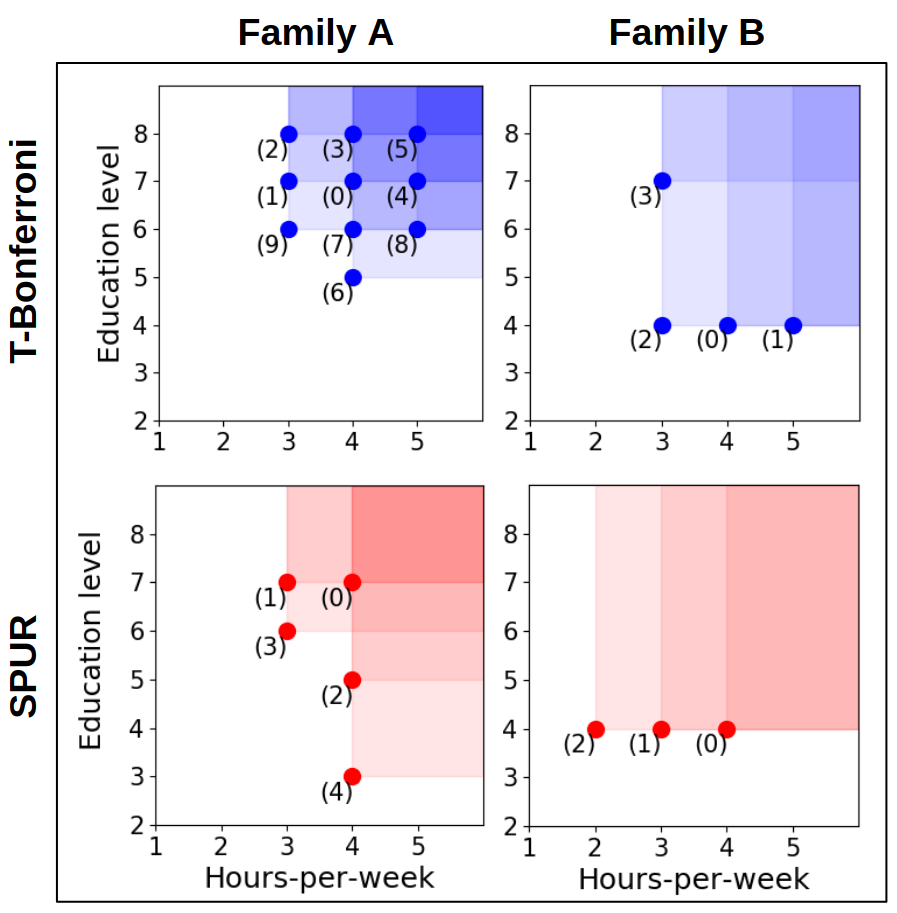}
    \vspace{-0.5em}
    \caption{Difference in discovered patterns and sorted indexes of p-values for two family A = (Male, Self-emp, Prof-specialty) and B = (Male, Private, Exec-managerial) by the existing method T-Bonferroni and our method SPUR.}\label{fig:rejections}
    \vspace{-1em}
  \end{figure}

\mysection{Introduction\label{ch:intro}}
Statistically significant pattern mining (SSPM) is the task of finding patterns that statistically occur more often in the data for one class than for another. Different from other pattern mining frameworks, in SSPM, the discovery of each pattern is evaluated via a statistical hypothesis test: a process to obtain a p-value which quantifies the probability that the association observed in the data is due to chance. The goal of SSPM is to maximize the number of true discoveries, i.e., minimize type-II error while controlling the number of false discoveries, i.e., controlling type-I error. With its statistical guarantee, SSPM is important and widely applied in fields such as genetics, healthcare, and market analysis. For example, in healthcare, discoveries of treatment combinations that are significantly efficacious could improve healthcare service quality. Similarly, medical scientists are interested in finding significant patterns of gene alleles associated with the onset of disease.


In this study, we focus on the utility of patterns, an aspect that is not considered in existing SSPM works. Pattern utility is essential in many applications spanning medicine, finance, and e-commerce, among others~\cite{gan2018survey}. For example, assume that we have the Adult dataset in which each transaction contains several demographic attribute values as explanatory variables and a binary target class (if income $>50$K or not). In general, SSPM aims to find patterns of demographic attributes (e.g., ``university graduate'' and ``works 60 hours/week'') that are significantly associated with ``income $>50$K''. Herein, we further introduce a preference relation between patterns and aim to find out the most useful patterns under the constraint of statistical significance. Continuing with the same example, assume that we are more interested in patterns that realize ``higher income with less education'' than ``higher income with higher education''; we can define patterns with ``less education'' as being more preferred than patterns with ``higher education''.

This paper serves to propose SPUR (Significant Pattern mining with Utility Relationship), a method that aims to discover statistically significant patterns with the highest utility. There are three main challenges that must be solved to achieve this goal.

First, our goal is different from the standard goal of SSPM.\@ Generally, SSPM aims to discover as many patterns as possible, while controlling type-I error.  Our goal is to discover the patterns that have the highest utility while still controlling type-I error. Therefore, existing SSPM methods, such as Tarone-Bonferroni (T-Bonferroni), are not necessarily practical for our setting. Since these methods do not consider preferences for different patterns, they may waste the significance budget for controlling the type-I error of patterns which are not useful. Here, we propose that after discovering a pattern, we can ignore all the patterns that are less useful than the discovered pattern. We expect that by ignoring less useful patterns, we can discover more useful patterns without violating the type-I error constraint.

Fig.~\ref{fig:rejections} demonstrates the statistically significant patterns in the Adult dataset found by the existing T-Bonferroni method (top) and our SPUR method (bottom). Specifically, we show the discovered patterns of ``education level'' and ``work hour'' while fixing other demographic features. We defined  ``less education'' and ``less hours-per-week'' as being more preferred than ``higher education'' and ``more hours-per-week''. As we see from the figure, in two different demographic conditions, our method can find more preferred (i.e., ``less education and less hours-per-week'' but ``higher income'') patterns than the existing method with fewer rejections.

Second, since each discovery in SSPM is evaluated via a statistical test, the proposed approach is only valid if we can develop an iterative multiple-testing procedure that can safely ignore less useful hypotheses after rejecting a hypothesis. By ``safely'', we mean that it does not violate the type-I error constraint. To the best of our knowledge, no existing multiple-testing procedure can fulfill this requirement. We discovered that this is achievable for a specific SSPM problem class. We propose a method that can control the FWER under a particular assumption about the independence of p-values, in that the p-values of false null hypotheses and true null hypotheses are mutually independent. In the SSPM context, this assumption can be satisfied when Fisher's exact test is used, and the setup of patterns guarantees that there is no overlap of samples between distinct patterns (see Section~\ref{assumption} for more details). We note that such a setting fits well for most of our intended applications.

Finally, because we often have to consider numerous patterns in the SSPM task, the number of hypotheses in the multiple-testing problem is also large. Generally, when the number of tests is large, simple multiple-testing adjustments, such as the Bonferroni correction, become too conservative for discovering significant patterns. Recently, many studies focus on the Tarone's trick~\cite{tarone1990modified}, which takes account of ``untestable'' hypotheses, resulting in more discoveries while still controlling the FWER.\@ Thus, to be practical in the SSPM setting, our method must also be able to leverage the Tarone's trick.

 Our contributions in this work are as follows:
\begin{itemize}
    \item We introduce the problem of discovering statistically significant patterns with the highest utility, given the ordinal utility between patterns. Then, we propose an iterative multiple-testing method that can reject more useful hypotheses and carefully ignore less useful hypotheses at each iteration.
    \item We prove that the proposed method can control the FWER under a particular assumption (independence of p-values between true and false hypotheses), which can be satisfied in a realistic SSPM problem class.\@
    \item We prove that the proposed method can always achieve a result that has a higher utility compared to T-Bonferroni, a standard SSPM method.\@
    \item We conduct several experiments using both synthetic and real-world datasets. For the real-world experiment, our method achieves a discovery result with higher utility for all five conducted tasks.
\end{itemize}

The remainder of the paper is organized as follows. We introduce some related works and fundamental concepts in Section~\ref{ch:pastwork} and~\ref{ch:background}. Then, we formally define our problem setting in Section~\ref{ch:setting}. Details concerning the proposed method are provided in Section~\ref{ch:method}. We prove the theoretical guarantee for FWER and improvement of utility in Section~\ref{ch:analysis}. Next, we demonstrate that the proposed method outperforms other methods in discovering useful patterns in Section~\ref{ch:experiment}, using both synthetic and real-world data. Finally, brief conclusions are put forward in Section~\ref{ch:conclusion}.

\mysection{Related Work\label{ch:pastwork}}
The most fundamental challenge in SSPM is the explosion in the number of patterns due to the number of factors. Some methods have been proposed to overcome this challenge in terms of improving discovery  power and navigating complexity. Early works in this respect were proceeded in~\cite{hamalainen2012kingfisher, webb2007discovering}. Then, Limitless Arity Multiple-testing Procedure (LAMP) - a method which can efficiently discover significant patterns in the higher-order were proposed in~\cite{terada2013statistical}. LAMP is designed with a combination of the Tarone's trick\cite{tarone1990modified} and the association rule mining algorithm Apriori\cite{agrawal1994fast}. Other studies have attempted to improve or extend LAMP to other settings\cite{terada2016significant, minato2014fast}.

There is also a body of literature focused on other types of tests or other aspects of the problem. For example,~\cite{llinares2015fast} studied the Westfall-Young permutation test for the purpose of dealing with the dependence between patterns.~\cite{llinares2017genome} worked on Cochran-Mantel-Haenszel and~\cite{pellegrina2019spumante} focused on Barnard's exact test. Moreover,~\cite{komiyama2017statistical} focused in the statistical emerging pattern mining problem,~\cite{webb2016multiple} considered the hypotheses stream problem, while~\cite{sugiyama2019finding} focused on finding significant interactions between continuous features. However, to the best of our knowledge, no SSPM studies have hitherto focused on the utility of patterns.

In pattern mining, many studies have focused on the utility of patterns, namely utility-oriented pattern mining (UPM). UPM is an essential task with numerous applications in finance, medicine, and e-commerce, among others\cite{gan2018survey}. Many UPM approaches have been proposed, expanding to various subfields, including high-utility item sets\cite{yao2006mining}, high-utility rules\cite{lee2013utility}, and maximal high utility\cite{shie2012efficient}. However, no studies on UPM have offered a statistical guarantee for discovered patterns.

\mysection{Preliminaries\label{ch:background}}
\mysubsection{Statistical association testing}\label{SSPM}
Statistical association testing is a procedure for testing whether two random variables are statistically dependent, or in other words, associated. In the context of pattern mining, this is a procedure to test whether the presence of a pattern is related with the occurrence of a specific event, represented by a class label. Suppose that we have a dataset $D = \{t_1, t_2, \ldots, t_{n_D}\}$ that contains $n_D$ transactions defined in the universe of $m$ items $I = \{1, \ldots, m\}$. Here, a transaction $t_i$ can be described by a vector $\mbf{x}_i$ of length $m$ and a binary-class label $y_i$. The vector $\mbf{x}_i$ indicates whether the corresponding items present in the transaction; i.e., $x_{ij}=1$ if item $j$ appear in transaction $t_i$. We call a set of items $s \subseteq I$ a pattern and define the indication variable $X_{i,s}=\prod_{j \in s} x_{ij}$ for pattern $s$. Here, $X_{i,s}=1$ if pattern $s$ appears in transaction $t_i$ and $X_{i,s} = 0$ otherwise.

The association of pattern $s$ and target class $y=1$ can be investigated by conducting an independence test with the null hypothesis $H_0\text{: }X_s \bot y$ via a $2\times2$ contingency table as in Table~\ref{tab:contingency}. Here, $n_D$ is the total number of transactions, and $n_1$ is the number of transactions with label $y=1$. Moreover, $n_s$ is the support for pattern $s$; i.e., the number of transactions that contain pattern $s$ and $a_s$ is the support for pattern $s$ among class $y=1$. A widely used independence test in SSPM is Fisher's exact test, which is a conditioned test in which $n_D, n_1, n_s$ are assumed to be fixed\cite{fisher1950statistical}. Under the null hypothesis of no association between $X_{s}$ and $y$, the count cell $a_s$ follows a hypergeometric distribution $P(. \mid n_D, n_1, n_s)$. Thus, the probability of observing the current table can be calculated as follows:

\begin{align*}
  P(a_s \mid n_D, n_1, n_s) = \frac{\binom{n_D}{a_s} \binom{n_D - n_1}{n_T - x_T}}
  {\binom{n_D}{n_T}}.
\end{align*}

Consequently, the p-value, i.e., the probability of observing a contingency table that is equally or more extreme as the observed table under the null hypotheses, can be obtained as
\begin{align*}
  p_{\text{val}}^{(s)} = \sum_{k=a_s}^{\min\{n_D, n_s\}} P(k | n_D, n_1, n_s).
\end{align*}

If the p-value $p_{\text{val}}^{(s)} \leq \alpha$ holds for some significance level $\alpha$, we can reject the null hypothesis of no association, and conclude that pattern $s$ is associated with  outcome $y$ under the significance level $\alpha$. Through the significance testing procedure, the probability of falsely rejecting a true null hypothesis, i.e., the probability of falsely discovering a false pattern, is controlled under the desired significance level $\alpha$.

\begin{table}[t]
  \centering
  \caption{$2\times2$ contingency table for pattern S}\label{tab:contingency}
  \vspace{-0.5em}
  \resizebox{0.6\linewidth}{!}{%
    \begin{tabular}{cccc}
      \toprule
                     & $X_{i,s} = 1$ & $X_{i, s} = 0$            & \textit{Total} \\
      \cmidrule(lr){1-1}\cmidrule(lr){2-3}\cmidrule(lr){4-4}
      $y_i = 1$      & $a_s$         & $n_1 - a_s$               & $n_1$          \\
      $y_i = 0$      & $n_s - a_s$   & $n_D - n_s - (n_1 - a_s)$ & $n_D-n_1$      \\
      \cmidrule(lr){1-1}\cmidrule(lr){2-4}\cmidrule(lr){4-4}
      \textit{Total} & $n_s$         & $n_D - n_s$               & $n_D$          \\
      \toprule
    \end{tabular}}
\end{table}

\mysubsection{Multiple testing}\label{multiple-testing}
In the previous section, we described an individual test for a pattern. However, in association mining, we have to consider many patterns, which means that multiple hypotheses have to be tested at the same time. If each test is conducted independently with a significance level of $\alpha$, the probability that at least one false discovery occurs would be much larger than $\alpha$. Typically, in such a case, it is necessary to control the overall error of all tested hypotheses. Two criteria are widely used in this respect: the familywise error rate (FWER) and false discovery rate (FDR). In what follows, we focus on controlling the FWER, which is defined as the probability of making at least one false rejection. Assume that we want to test a hypotheses set $H = \{h_1, \ldots, h_{|H|}\}$. Letting $T \subseteq H$ be the subset of true null hypotheses and $R \subseteq H$ be the set of hypotheses that were rejected by the multiple-testing procedure, FWER is defined as follows:
\begin{restatable}[Familywise error rate (FWER)]{definition}{deffwer}\label{def:fwer}
  \begin{align*}
    \text{FWER} = \p{R \cap T \neq \emptyset}
  \end{align*}
\end{restatable}

The most straightforward method to control the FWER is the Bonferroni correction, which uses a corrected rejection threshold $\delta = \alpha / m$ for each test. However, when the number of hypotheses is large, the Bonferroni correction can become too conservative. Recently, in the context of SSPM, many methods starting with LAMP\cite{terada2013statistical} have leveraged the Tarone's trick\cite{tarone1990modified} to exclude untestable hypotheses that will never be significant. To be specific, for tests in which the test statistics are discrete, we can evaluate the minimal attainable p-value for that test. When $\psi(h)$ is the minimal attainable p-value, hypothesis $h$ will never be rejected by a threshold $\sigma$ if $\psi(h) > \sigma$. In the case of Fisher's exact test, the minimal attainable p-value for a pattern $s$ is obtained as
\begin{align*}
  \psi(s) = p(k \mid n_D, n_1, n_s) \text{ where } k = \min \{n_1, n_s\}.
\end{align*}

Using $\psi(s)$, regardless of the count cell $a_s$, if $\psi(s) > \delta$, the hypothesis related to pattern $s$ will never be significant for significance level $\delta$. In SSPM, $n_1$ can be considered as fixed because the data are commonly given beforehand. Moreover, since $n_s = \sum_i {X_{i,s}}$, pattern $s$ can be ignored if there are too few transactions that contain pattern $s$, i.e., if the support for that pattern is too small. Since the number of hypotheses to be considered decreases with Tarone's trick, a larger rejection threshold can be used and results in more discoveries. Formally, the T-Bonferroni method leverages the Tarone's trick by setting the rejection threshold $\sigma_\mathrm{Tarone}$ as follows:
\begin{align*}
  \sigma_\mathrm{Tarone} = \max \{\sigma \mid \sigma |\kappa(\sigma)| \leq \alpha\}.
\end{align*}
where $\kappa(\sigma) = \{h \mid \psi(h) \leq \sigma\}$ is the testable hypothesis set regarding the rejection threshold $\sigma$.

\mysubsection{Assumption and the target problem class}\label{assumption}
Next, we introduce the key assumption that is necessary for our method followed by the target problem class of the proposed method.

\begin{restatable}[p-values independence]{assumption}{asindependence}\label{as:independence}
  For any hypotheses pair $h_t \in T$ and $h_f \in F$, their p-values $p_{h_t}$ and $p_{h_f}$ are independent.
\end{restatable}

In other words, we assume that the p-values obtained from the true patterns and the false patterns are mutually independent\footnote{Stated differently, we assume that the test statistics obtained from the true patterns and the false patterns are mutually independent and the p-values are determined using these test statistics.}. We next claim that Assumption~\ref{as:independence} holds for certain types of SSPM problems by the following proposition.

\begin{restatable}{proposition}{prousablecase}\label{pro:usable-case}
  In the setting of Section~\ref{SSPM} and~\ref{multiple-testing}, suppose p-values of Fisher's exact test are considered. Given $D$, if for any two distinct patterns $s, s' \in S$, $\{t \in D \mid s \in t\} \cap \{t \in D \mid s' \in t\} = \emptyset$ holds, Assumption 3.1 holds.
\end{restatable}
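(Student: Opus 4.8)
The plan is to reduce the claimed independence of the two p-values to the independence of the underlying Fisher test statistics, and then to localize each statistic to its own pattern's support, where the disjointness hypothesis does all the work.

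First I would recall that, with the margins $n_D$, $n_1$, and $n_s$ held fixed — exactly as Fisher's exact test and the Tarone--Bonferroni framework of Sections~\ref{SSPM} and~\ref{multiple-testing} do — the p-value is a deterministic function of the single top-left cell count $a_s$: the formula $p_{\mathrm{val}}^{(s)}=\sum_{k=a_s}^{\min\{n_D,n_s\}} P(k\mid n_D,n_1,n_s)$ depends on the realized data only through $a_s$. Writing $g_s$ for this map, the hypothesis $h$ for pattern $s$ has p-value $g_s(a_s)$. Since a deterministic function of a random variable inherits whatever independence that variable enjoys, it suffices to prove that the statistics $a_{s_t}$ and $a_{s_f}$ are independent, where $s_t,s_f$ are the patterns underlying $h_t\in T$ and $h_f\in F$.

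Second, I would rewrite the statistic in a form that exposes its data dependence. Because each $y_i\in\{0,1\}$, we have $a_s=\sum_{i=1}^{n_D} X_{i,s}\,y_i$, so $a_s$ is a function only of the labels $\{y_i : X_{i,s}=1\}$ of the transactions in the support of $s$, i.e.\ of $\{t\in D\mid s\in t\}$. The hypothesis $\{t\in D\mid s_t\in t\}\cap\{t\in D\mid s_f\in t\}=\emptyset$ then says precisely that the index sets $\{i : X_{i,s_t}=1\}$ and $\{i : X_{i,s_f}=1\}$ are disjoint, so $a_{s_t}$ and $a_{s_f}$ are functions of two disjoint blocks of the label vector $(y_1,\dots,y_{n_D})$. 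Invoking the data-generating structure of the problem class — the transactions are drawn independently, so conditionally on the fixed feature part that determines each $X_{i,s}$ the labels $y_1,\dots,y_{n_D}$ are mutually independent — independence across disjoint blocks gives $a_{s_t}\perp a_{s_f}$, whence $p_{h_t}=g_{s_t}(a_{s_t})\perp g_{s_f}(a_{s_f})=p_{h_f}$, which is Assumption~\ref{as:independence}. I would also remark that nothing here uses the true/false labelling, so the argument actually yields mutual independence for \emph{every} pair of disjoint-support patterns; the true-null/false-null instance is simply the case the FWER analysis needs.

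The step requiring the most care — and the main obstacle — is reconciling ``margins fixed'' with ``labels independent.'' Read literally, Fisher's conditioning imposes the exact constraint $\sum_i y_i=n_1$, which couples the counts over disjoint supports into a (negatively correlated) multivariate hypergeometric law and would destroy independence. The resolution, consistent with the footnote to the assumption, is to treat $n_D$ and $n_1$ as fixed design constants that merely parametrize the deterministic maps $g_s$, while the stochastic variation is carried entirely by the within-support counts $a_s$, which depend on disjoint, independent label blocks. Making this modelling stance explicit — so that ``the test statistics are independent and the p-values are determined from them'' becomes a statement about disjoint-support localization rather than an unverified coincidence — is the crux of the proof; once it is pinned down, the three steps above are routine.
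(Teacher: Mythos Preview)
Your proposal is correct and follows the same arc as the paper's proof---reduce the p-value to the cell count $a_s$, localize $a_s$ to the labels on the support of $s$, and invoke the disjoint-support hypothesis---but the final independence step is argued differently. The paper fixes $h_s\in T$, $h_{s'}\in F$ and explicitly uses the true-null condition $X_s\perp y$ to factor the joint law of $(x_i^s,y_i)$ and $(x_{i'}^{s'},y_{i'})$ across distinct transactions; that step is used, not incidental, in its derivation. You instead condition on the entire feature array and appeal to i.i.d.\ sampling so that the labels $y_1,\dots,y_{n_D}$ are conditionally independent, whence disjoint index blocks give independent counts directly. Your route is more elementary and, as you observe, yields the stronger conclusion that \emph{any} two disjoint-support patterns have independent statistics, not only true/false pairs; the paper's route trades this generality for not having to name the i.i.d.\ sampling assumption explicitly. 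The margin-conditioning tension you flag as the main obstacle is handled at the same informal level in the paper (which simply conditions on $n_s,n_{s'},n_D$ without addressing the induced coupling), so you are not losing rigor relative to the original.
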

\begin{proof}
  \vspace{-0.3em}
  Appendix~\ref{app:proofs}.
  \vspace{-0.2em}
\end{proof}

Proposition~\ref{pro:usable-case} claims that Assumption~\ref{as:independence} can be satisfied by two requirements. First, it requires that Fisher's exact test is used, which is a popular independent test in SSPM contexts.\@ Second, it requires that the pattern set is designed to separate the dataset into non-overlapping subsets. This second requirement can be satisfied in several scenarios, for example
\begin{itemize}
\item Categorical dataset: consider a dataset with several categorical attributes $(x_1, x_2, \ldots, x_m)$ and a target label $y$. Assumption~\ref{as:independence} holds by any pattern set $S \subseteq I_1 \times \cdots \times I_m$, where $I_d$ is the set of possible values for variable $x_d$.
\item Transaction dataset with fixed transaction size: consider a dataset that for any transaction $t_i \in D$, the number of items in $t_i$ is $k$ for a fixed $k$, i.e., $\sum_{j \in I} x_{ij} = k$. Assumption~\ref{as:independence} holds by any pattern set $S \subseteq I^k$.
\end{itemize}

As a counter-example, Assumption~\ref{as:independence} does not hold for a set $S$ which contains both pattern $\{a\}$ and pattern $\{a, b\}$, since a transaction that contains $\{a, b\}$ will also contain $\{a\}$. It is noted that the categorical dataset setup fits well for most of our intended applications and is adopted for our real-world experiment in Section~\ref{ch:experiment}.

\mysection{Problem setting\label{ch:setting}}
To formally define our problem setting, we first introduce the concept of ordinal utility and clarify our goal of finding significant patterns with the highest utility. Next, we define a criterion for evaluating the goodness of the discovered result and discuss the limitation of existing approaches in our setting.

\mysubsection{Ordinal utility and the dominating subset}
We focus on the ordinal utility between patterns, i.e., which of two options is better. This differs from cardinal utility, which would consider how good the two options are and thus how much better one option is compared to the other. We chose ordinal utility for our setting because, in SSPM, assigning a utility score for each pattern is not always practical for patterns with multiple items. By contrast, using (partial) ordinal utility, we can easily define a flexible utility relationship between patterns, possibly using their items. We use $s_1 \succ s_2$ and $s_1 \approx s_2$ to denote that ``pattern $s_1$ is useful to pattern $s_2$'' and ``pattern $s_1$ is equally useful as pattern $s_2$'', respectively. Moreover, since each pattern $s$ is evaluated via a hypothesis test $h_s$ in SSPM, we similarly use $h_{s_1} \succ h_{s_2}$ and $h_{s_1} \approx h_{s_2}$ to denote  the ordinal utility between hypotheses. A preference order of utility for a pattern set $S$ is defined in the following.

\begin{restatable}[Preference order of utility]{definition}{defpreferenceorder}
  A preference order $\prec$ on a pattern set $S = \{s_1, \ldots, s_{|S|}\}$ is a transitive binary relation in which $(S, \prec)$ is a partially ordered set.
\end{restatable}

Such a preference order can be predefined based on the background knowledge or preferences of the user. For example, suppose that we are considering a dataset on medication usage in which each transaction includes a combination of drugs used by a patient and a binary class \textit{cured} or \textit{not-cured}. Moreover, we assume that we can define the total cost $cost(s)$ and the adverse effect level $adv(s)$ for each drug combination pattern $s$ as ordinal levels. A user who prefers patterns with a lower cost and less adverse effects can define the ordinal utility between patterns as
\begin{align*}
  cost(s_1) = cost(s_2) \land adv(s_1) = adv(s_2) & \iff  h_{s_1} \approx h_{s_2} \\
  cost(s_1) < cost(s_2) \land adv(s_1) \leq adv(s_2) & \implies  h_{s_1} \succ h_{s_2} \\
  cost(s_1) \leq cost(s_2) \land adv(s_1) < adv(s_2) & \implies  h_{s_1} \succ h_{s_2}.
\end{align*}

\begin{figure}[h]
  \centering
  \vspace{-1em}
  \includegraphics[width=0.5\linewidth]{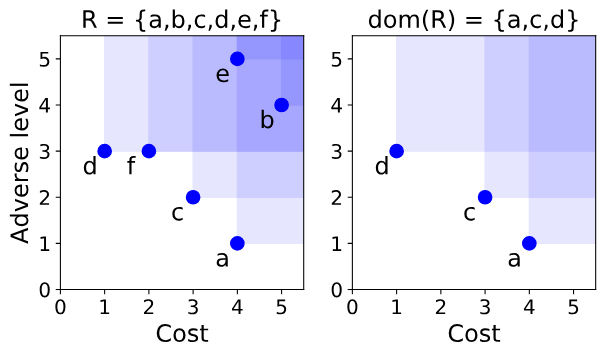}
  \vspace{-1em}
  \caption{Preference order and the dominating set}\label{fig:illustrate-drug}
\end{figure}

We show an example of a pattern set $R = \{a, b, c, d, e, f\}$ in Figure~\ref{fig:illustrate-drug} where each point $a, \ldots, f$ represents the utility of that pattern, i.e., the cost and the adverse effect levels. The colored rectangle next to each point represents the area that is less useful than that pattern. From the foregoing definition, we have that $h_b \prec h_d$, $h_e \prec h_c$, etc. We notice that there are pairs of patterns such as $h_c$ and $h_d$ in Figure~\ref{fig:illustrate-drug} that are not comparable. In other words, we do not require the preference order $\prec$ to be complete. This design choice is especially useful for the pattern mining setting, where each pattern contains multiple items, and multiple aspects of the pattern can be considered.

In SSPM, the utility of a pattern can be defined using the items included in that pattern. For example, in the medication example, let $cost(t)$ and $adv(t)$ be the ordinal level of the cost and the adverse effect level for a drug $t$. We can define a utility function $U: S \to N^d$ as
\begin{align*}
U(s)
= \begin{pmatrix} cost(s) \\ adv(s) \end{pmatrix}
= \begin{pmatrix} \sum_{t \in s} cost(t) \\ \max_{t \in s} adv(s) \end{pmatrix}.
\end{align*}

Next, we recall that our goal is to discover significant patterns with the highest utility. Hence, the goodness of a discovered pattern set is determined only by the patterns with the highest utility in that set. We call such a subset the (utility) dominating subset.

\begin{restatable}[(Utility) dominating subset]{definition}{defdominatingset}\label{def:dominating_set}
  For a pattern set $K = \{s_1, \dots, s_k\}$, we call $dom(K) \subseteq K$ the (utility) dominating subset of $K$ if $dom(K) = \{s \in K \cond \nexists_{s' \in K}, s \prec s'\}$
\end{restatable}

We also illustrate this concept by showing the dominating set $dom(R)$ for $R = \{a,b,c,d,e,f\}$ in Figure~\ref{fig:illustrate-drug} (right). We see that since $b \prec h$ (and also $c, d$), we have $b \notin dom(R)$. Similarly, we also have that $e \notin dom(R)$ and $d \notin dom(R)$ and we can obtain the dominating subset $dom(R) = \{a, c, d\}$.

Our setting also requires that the discovered patterns must be true patterns. That is, letting the set of true patterns (the set of false null hypotheses) be $F$, the best discovery result that can be achieved is $dom(F)$. However, in the SSPM setting, the number of discoveries is often limited because we have to control the number of false discoveries. Hence, it is often not possible to achieve $dom(F)$, especially when the number of true patterns is large. Thus, the practical goal in our setting is to discover a pattern set whose utility is as close as $dom(F)$ as possible.

\mysubsection{Criterion for utility evaluation}
Here, we introduce a metric to compare the goodness of two pattern sets, or in other words, two rejected hypothesis sets. This is then used to compare the utilities between discovered set $dom(R)$ and optimal set $dom(F)$. We also use this criterion to compare the results of different methods, which is required in the real-world setting, with unknown $F$.

\begin{restatable}[Utility measure]{definition}{defdistance}\label{def:distance}
  Given two pattern sets $K$ and $K'$, the utility measure from $K'$ to $K$ is denoted by $D_u(K \| K')$.
  \begin{align*}
    D_u(K \| K') = |\{h \in dom(K) \cond \nexists_{h' \in dom(K')}, h \preceq h'\}|.
  \end{align*}
\end{restatable}

Intuitively, $D_u(K \| K')$ is the number of more useful patterns in $dom(K)$ that are not included in $dom(K')$. We note that $D_u(K \| K')$ can be asymmetric, that is, can be different from $D_u(K' \| K)$. In Figure~\ref{fig:illustrate-distance}, we illustrate this utility measure for two sets $K = \{a, b, c, d\}$ and $K' = \{b, c, e, f\}$. In this case, $dom(K) = \{a, c, d\}$ and $dom(K') = \{c, f\}$. Thus, $D_u(K \| K') = 2$ and $D_u(K' \| K) = 0$.

\begin{restatable}[Utility comparison]{definition}{defcomparation}~\label{def:comparation}
  Given two pattern sets, $K$ and $K'$, we use $K \succ K'$ to denote that $K$ is more useful than $K'$, while $K \approx K'$ denotes that $K$ is equally as useful as $K'$. Here, the utility comparison is defined as follows:
  \begin{align*}
    D_u(K \| K') = 0 \land D_u(K' \| K) = 0 &\iff K \approx K' \\
    D_u(K \| K') > 0 \land D_u(K' \| K) = 0 &\iff K \succ K' \\
    D_u(K \| K') > 0 \land D_u(K' \| K) > 0 &\iff \text{not comparable}.
  \end{align*}
\end{restatable}

\begin{figure}[h]
  \centering
  \vspace{-1em}
  \includegraphics[width=1.0\linewidth]{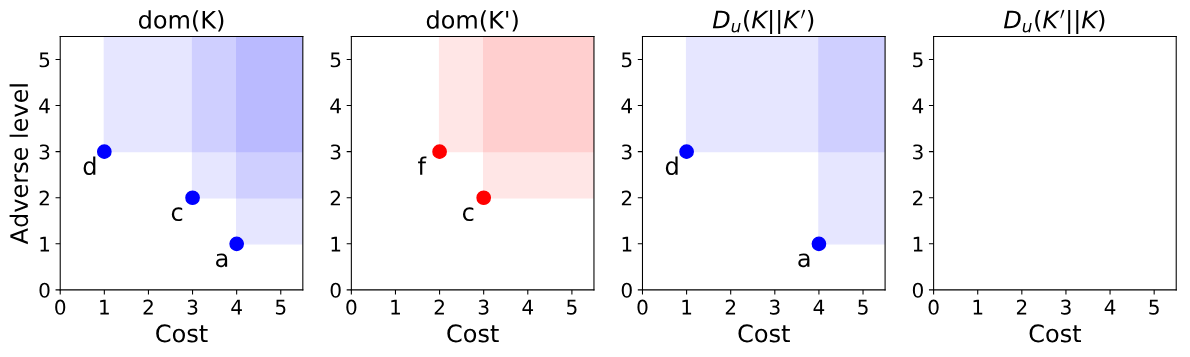}
  \vspace{-1em}
  \caption{Utility measure between two pattern sets $K=\{a, b, c, d\}$ and $K' = \{b, c, d, e\}$}\label{fig:illustrate-distance}
\end{figure}

It should be noted that there are cases such that two pattern sets are not comparable. Specifically, two pattern sets $K$ and $K'$ are not comparable if $D_u(K \| K') \neq \emptyset$ and $D_u(K' \| K) \neq \emptyset$. This is the case that $K$ includes some more useful patterns that are not included in $K'$ while at the same time $K'$ also includes some more useful patterns that are not included in $K$. Since our goal is to discover as many useful patterns as  possible, we consider both $K$ and $K'$ as profitable results, even when $D_u(K \| K') > D_u(K' \| K)$.

\subsection{Existing approaches and their limitations}\label{limitation}
The most simple approach to our problem is to use standard multiple-testing procedures such as Bonferroni or Holm\cite{holm1979simple} to obtain significant patterns, then filter out less useful patterns. The limitation of this approach is obvious: these procedures treat all hypotheses equally regardless of their utility. These methods are only developed to maximize the number of rejections $|R|$ under the constraint of FWER, i.e., to minimize type-II error while controlling type-I error. By contrast, our setting directly focuses on the dominating subset $dom(R)$, only. Consequently, from a utility perspective, many rejections of less useful patterns in $R$ is meaningless, resulting in a waste of the significance budget for controlling the type-I error of hypotheses which are not useful.

A better way forward may be a weighted approach, such as the weighted Bonferroni procedure\cite{rosenthal1983ensemble}. In this procedure, a weight is assigned to each hypothesis based on its importance. Hence, more important hypotheses have a higher chance of being rejected. However, if small weights are unfortunately assigned to false null hypotheses, many false null hypotheses will not be rejected, because the rejection thresholds are too strict. In the worst case, the weighted procedure might falsely accept all the hypotheses, i.e., miss all the true patterns. Considering that the weights must be assigned before conducting statistical tests to properly control FWER and no one knows which hypotheses are true null hypotheses, the weighted approach does not necessarily work well in practice.

\mysection{Proposed method\label{ch:method}}
\mysubsection{The proposed method: SPUR}~\label{sec:algorithm}
In our setting, after discovering a pattern set $R$, discovering any pattern $s$ that is less useful than a pattern in $R$, i.e., $\exists s' \in R, s' \succ s$, does not increase the result's utility. Thus, once a hypothesis $h_s$ is rejected, we can ignore all less useful hypotheses $h_{s'} \prec h_{s}$. Here, ``ignoring a hypothesis'' means ``accepting the hypothesis''. Because we have to guarantee the type-I error when rejecting a hypothesis, we can save the significance budget by rejecting more useful hypotheses and, at the same time, accepting less useful ones.

We propose an iterative multiple-hypotheses procedure that can conduct such an adaptive rejection process while controlling the FWER under Assumption~\ref{as:independence}. In particular, our method repeatedly (1) \textit{rejects the most significant hypothesis in the candidate hypothesis set} and (2) \textit{ignores all hypotheses in the candidate set that are less useful than the rejected hypothesis in the last step}. Next, we explain our algorithm along with the pseudocode provided in Algorithm~\ref{al:SPUR}. We also note that FWER control by our adaptive procedure is not obvious and we will discuss this in detail in Section~\ref{fwer}.

\begin{algorithm}[t]
  \KwData{Hypothesis set $H$, statistical level $\alpha$}
  \KwResult{Reject set $R$}

  $t \leftarrow 1, R_0 \leftarrow \emptyset$\;
  $H_1 \leftarrow H, \delta_1 \leftarrow \alpha, p_0^r \leftarrow 0$\;
  \While{$H_t \neq \emptyset$} {
    $\sigma_t \leftarrow \max \{ \sigma : (\sigma - p_{t-1}^r) |\kappa_t(\sigma)| \leq \delta_t \}$ \tcp*[r]{threshold}
    $p_t^r \leftarrow \min_{h \in H_t}p_h$\;
    $h_t^r \leftarrow \text{argmin}_{h \in H_t} p_h$\;
    \eIf{$p_t^r \leq \sigma_t$} {\label{al:rejection-rule}
      $R_{t+1} \leftarrow R_t \cap \{h_t^r\}$ \tcp*[r]{reject}
      $H_{t+1} \leftarrow  \{h \in H_t \cond h \succ h_t^r\}$ \tcp*[r]{ignore unuseful}
      $\tau_t \leftarrow \delta_t / (\sigma_t - p_{t-1}^r)$\;
      $\delta_{t+1} \leftarrow \delta_t - \tau_t (p_t^r - p_{t-1}^r) + p_t^r$ \tcp*[r]{modify budget}\label{al:modify-delta}
    }{ \Return{$R_{t-1}$}\; }
    $t \leftarrow t+1$
  }
  \Return{$R_{t-1}$}\;
  \caption{SPUR}\label{al:SPUR}
\end{algorithm}

\textbf{Initialization and notation:} First, we initialize the significance budget and the rejection set as $\delta_1=\alpha, R_0=\emptyset$. Here, $\delta_t$ is used to control the FWER of the procedure at iteration $t$. We define $H_t$ as the set of the remaining candidate hypotheses at iteration $t$ and let $H_1 = H$. Furthermore, we also define $\kappa_t(\sigma) = \{h \in H_t \cond \psi(h) \leq \sigma\}$ as the set of testable hypotheses at the $t$-th iteration for a rejection threshold $\sigma$.

\textbf{Rejection procedure:} At each iteration $t$, we decide whether to reject a hypothesis in the candidate set $H_t$ as follows.
\begin{enumerate}
\item We obtain the smallest p-value $p_t^r = \min_{h \in H_t} p_h$ and the corresponding hypothesis $h_t^r = \text{argmin}_{h \in H_t} p_h$. We also assume that $h_t^r$ can be decided by a predefined rule if there is more than one hypothesis realizing the smallest p-value.
\item We obtain $\sigma_t = \max\{\sigma: (\sigma-p_{t-1}^r) |\kappa_t (\sigma)| \leq \delta_t\}$ as the rejection threshold to decide whether to reject $h_t^r$.
\item If $p_t^r \leq \sigma_t$, we reject $h_t^r$ by adding it to the rejection set, i.e., $R_{t+1} = R_t \cap \{h_t^r\}$. We also ignore (accept) all less useful hypotheses in $H_t$ by setting the next candidate set as $H_{t+1} = \{h \in H_t \cond h \succ h_t^r\}$. Finally, we modify the significance budget as $\delta_{t+1} = \delta_t - \tau_t (p_t^r - p_{t-1}^r) + p_t^r$, where $\tau_t = \delta_t / (\sigma_t - p_{t-1}^r)$.
\item If $p_t^r > \sigma_t$, we stop and return the rejection set $R = R_{t-1}$.
\end{enumerate}

\mysubsection{An intuitive explanation\label{sec:intuitive}}
In SPUR, the rejection threshold $\sigma_t$ and significance budget $\delta_t$ are managed in a way that is not so intuitive. We will prove that such updates are necessary to control the FWER properly in Section~\ref{ch:analysis}. Before providing the formal proof, we give a more intuitive account of our rejection rule by considering a simple case: the case without the Tarone's trick. In other words, we assume that $\forall h \in H: \psi(h) = 0$. We have the following proposition.

\begin{restatable}[SPUR in limited case]{proposition}{prospurlimit}\label{pro:spur_limit}
  Assume for any $h \in H$, $\psi(h) = 0$, then the rejection rule of SPUR at step~\ref{al:rejection-rule} can be rewritten as
  \begin{align*}
    p_t^r \leq \frac{\alpha - \sum_{i=1}^{t-1} p_i^r(|H_{i}| - |H_{i+1}| - 1)}{|H_t|} &\implies \text{ reject } h_t^r \\
    otherwise &\implies \text{stop}.
  \end{align*}
\end{restatable}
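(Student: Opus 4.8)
The plan is to show that, under the hypothesis $\psi(h)=0$ for all $h$, the threshold $\sigma_t$ computed on line~\ref{al:rejection-rule} of Algorithm~\ref{al:SPUR} coincides with the right-hand side of the claimed bound; the rejection test $p_t^r \le \sigma_t$ then immediately takes the stated form. The whole argument therefore reduces to obtaining a closed form for $\sigma_t$.

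First I would observe that $\psi(h)=0$ forces $\kappa_t(\sigma)=\{h\in H_t : \psi(h)\le\sigma\}=H_t$ for every $\sigma\ge 0$, so that $|\kappa_t(\sigma)|=|H_t|$ is constant in $\sigma$. The defining maximization then becomes $\sigma_t=\max\{\sigma : (\sigma-p_{t-1}^r)|H_t|\le\delta_t\}$, which solves explicitly to $\sigma_t=p_{t-1}^r+\delta_t/|H_t|$. A convenient consequence is $\tau_t=\delta_t/(\sigma_t-p_{t-1}^r)=|H_t|$, so the budget update on line~\ref{al:modify-delta} collapses to the linear recursion $\delta_{t+1}=\delta_t-(|H_t|-1)p_t^r+|H_t|p_{t-1}^r$.

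The core of the proof is a closed-form evaluation of $\delta_t$, which I would establish by induction on $t$, taking as invariant $\delta_t=\alpha-p_{t-1}^r|H_t|-\sum_{i=1}^{t-1}p_i^r(|H_i|-|H_{i+1}|-1)$. The base case $t=1$ holds since $\delta_1=\alpha$, $p_0^r=0$, and the empty sum vanishes. For the inductive step I substitute the invariant into the recursion above: the term $-p_{t-1}^r|H_t|$ carried from the hypothesis cancels exactly against the $+|H_t|p_{t-1}^r$ produced by the recursion, and rewriting $-(|H_t|-1)p_t^r$ as $-p_t^r|H_{t+1}|-p_t^r(|H_t|-|H_{t+1}|-1)$ absorbs the new summand $i=t$ and yields precisely the $t+1$ instance of the invariant. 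Substituting this closed form into $\sigma_t=p_{t-1}^r+\delta_t/|H_t|$, the two $p_{t-1}^r$ contributions cancel and leave $\sigma_t=\bigl(\alpha-\sum_{i=1}^{t-1}p_i^r(|H_i|-|H_{i+1}|-1)\bigr)/|H_t|$, which is exactly the threshold in the claimed inequality; the equivalence of the rejection rule follows.

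I expect the main obstacle to be the bookkeeping in the inductive step, specifically verifying that the telescoping cancellation of the $p_{t-1}^r|H_t|$ terms and the regrouping of $(|H_t|-1)p_t^r$ into the running summation work out cleanly. These manipulations are elementary but error-prone, and the key decision that keeps them tractable is to carry $\delta_t$ (rather than $\sigma_t$ directly) as the induction invariant, deferring the division by $|H_t|$ to the final substitution.
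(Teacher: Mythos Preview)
Your proposal is correct and follows essentially the same approach as the paper: both observe that $\psi(h)=0$ forces $|\kappa_t(\sigma)|=|H_t|$, deduce $\sigma_t=p_{t-1}^r+\delta_t/|H_t|$ and $\tau_t=|H_t|$, and then derive a closed form for $\delta_t$ from the budget recursion before substituting. The only stylistic difference is that the paper unrolls the recursion directly to $\delta_t=\alpha-\sum_{i=1}^{t-1}\bigl[(p_i^r-p_{i-1}^r)|H_i|-p_i^r\bigr]$ and then regroups coefficients of each $p_i^r$ together with the extra $p_{t-1}^r|H_t|$ term, whereas you bake that regrouping into your induction invariant from the outset; the algebra is the same.
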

\begin{proof}
  Appendix~\ref{app:proofs}.
\end{proof}

In our simple case, after ignoring less useful hypotheses, we decrease the significance budget by an amount of $p_i^r(|H_{i}| - |H_{i+1}| - 1)$. We note that $|H_{i}| - |H_{i+1}| - 1$ is exactly the number of ignored hypotheses due to the rejection of $h_i^r$. This term came from the modification of $\delta_t$ after each rejection, i.e., step~\ref{al:modify-delta} in Algorithm~\ref{al:SPUR}. Moreover, in Section~\ref{ch:experiment}, we demonstrate that this modification is necessary to properly control FWER by showing that the rejection threshold $\sigma_t = \frac{\alpha}{|H_t|}$ fails to control the FWER.\@ We also remark that when no hypotheses can be ignored, i.e., $\forall i \cm p_i^r(|H_{i}| - |H_{i+1} - 1|) = 0$, our method reduces to the step-down Holm method\cite{holm1979simple}. The full SPUR procedure is obtained by leveraging the Tarone's trick to additionally consider the testability of the candidate hypotheses set at each iteration.

\mysection{Theoretical analysis\label{ch:analysis}}
\mysubsection{FWER guarantee}
First, we show that the proposed method can control the FWER properly under Assumption~\ref{as:independence} by Theorem~\ref{theorem:fwer}. A more detailed proof is relegated to Appendix~\ref{app:proof_fwer}.\footnote{The completed proof is available at https://github.com/dizzyvn/SPUR/}

\begin{restatable}[FWER controlling]{theorem}{theoremfwer}\label{theorem:fwer}
  Under Assumption~\ref{as:independence}, the proposed  SPUR procedure conducted with significance level $\alpha$ controls the familywise error rate at $\alpha$.
\end{restatable}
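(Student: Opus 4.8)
The plan is to bound the FWER by the probability of the \emph{first} false rejection and to exploit Assumption~\ref{as:independence} by conditioning on the false-null p-values. Let $F$ and $T$ denote the false- and true-null sets, and let $t^\ast$ be the first iteration at which the rejected hypothesis $h_t^r$ belongs to $T$. By Definition~\ref{def:fwer} a true null enters $R$ exactly at such an iteration, so $\mathrm{FWER}=\p{t^\ast<\infty}$. First I would condition on $\mathcal{G}=\{p_h:h\in F\}$; by Assumption~\ref{as:independence} the true-null p-values $\{p_h:h\in T\}$ are independent of $\mathcal{G}$, and because Fisher's exact test yields valid (super-uniform) p-values we have $\p{p_h\le\sigma}\le\sigma$ for every $h\in T$, with $\p{p_h\le\sigma}=0$ whenever $\sigma<\psi(h)$ so that only testable hypotheses contribute.

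Given $\mathcal{G}$, the run of Algorithm~\ref{al:SPUR} is a deterministic function of the true-null p-values. I would introduce the \emph{clean} trajectory obtained by assuming the true nulls are never the argmin: they remain in the candidate set, and hence still affect $|\kappa_t(\cdot)|$, but are never rejected. This trajectory is fully determined by $\mathcal{G}$ and produces deterministic thresholds $\sigma_t$, budgets $\delta_t$, and residual true-null sets $T_t=T\cap H_t$. The actual run coincides with the clean trajectory up to iteration $t^\ast$; consequently, if the first false rejection occurs at iteration $t$, the rejected true null survived iteration $t-1$ (so its p-value is at least $p_{t-1}^r$) and clears the current threshold (so its p-value is at most $\sigma_t$). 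Hence $\{t^\ast=t\}$ implies that some $h\in T_t\cap\kappa_t(\sigma_t)$ has $p_h\in[p_{t-1}^r,\sigma_t]$, which is the event I would control.

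The core of the argument is an inductive invariant on the budget. I would prove by induction on $t$ that, conditional on $\mathcal{G}$ and on reaching iteration $t$ without a false rejection, the probability of \emph{ever} making a false rejection from iteration $t$ onward is at most $\delta_t-p_{t-1}^r\,|T_t\cap\kappa_t(\sigma_t)|$; in the Tarone-free setting of Proposition~\ref{pro:spur_limit} this reduces to $\delta_t-p_{t-1}^r|T_t|$. The base case $t=1$ uses $\delta_1=\alpha$ and $p_0^r=0$ to recover the bound $\alpha$, so establishing the invariant yields $\mathrm{FWER}\le\alpha$. The inductive step is where the nonobvious updates of lines~\ref{al:rejection-rule}--\ref{al:modify-delta} are used: the threshold $\sigma_t=\max\{\sigma:(\sigma-p_{t-1}^r)|\kappa_t(\sigma)|\le\delta_t\}$ caps the mass $|T_t\cap\kappa_t(\sigma_t)|(\sigma_t-p_{t-1}^r)$ of the newly exposed rejection band, while the update $\delta_{t+1}=\delta_t-\tau_t(p_t^r-p_{t-1}^r)+p_t^r$ with $\tau_t=|\kappa_t(\sigma_t)|$ is calibrated so that the risk charged to this band plus the residual risk carried into iteration $t+1$ does not exceed the inductive bound, even after the ignore step removes every hypothesis less useful than $h_t^r$ (which only discards true-null risk and is therefore safe).

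I expect the inductive step to be the main obstacle, for three intertwined reasons. First, the thresholds and budgets are \emph{adaptive}: $\sigma_t$ and $\delta_t$ depend on the realized false-null p-values and on which hypotheses have been ignored, so the decoupling afforded by Assumption~\ref{as:independence} through the conditioning on $\mathcal{G}$ must be invoked precisely, so that $\sigma_t$ and $\delta_t$ may be treated as constants when bounding the true-null minimum. Second, the p-values are discrete, so only the super-uniform inequality $\p{p_h\le\sigma}\le\sigma$ is available and the testability restriction $\kappa_t$ must be tracked throughout; showing that untestable true nulls contribute exactly zero and that the Tarone re-thresholding is consistent with the invariant at every iteration is delicate. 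Third, successive rejection bands $[p_{t-1}^r,\sigma_t]$ can overlap, so a naive union bound over iterations would double-count; the precise role of the correction term $p_{t-1}^r|T_t\cap\kappa_t(\sigma_t)|$ and of the $+p_t^r$ in the budget update is to cancel this overlap and make the per-iteration contributions telescope to at most $\alpha$. Verifying this cancellation, namely that the algebra of the budget update matches the probabilistic accounting of the exposed bands, is the crux of the proof.
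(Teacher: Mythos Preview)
Your architecture mirrors the paper's proof almost exactly: your ``clean trajectory'' is precisely the paper's \emph{false-hypotheses-based process} (Definition~\ref{def:false-process}), the reduction to the first false rejection is Lemma~\ref{lem:spur-bound}, the coupling of the actual run to the clean one up to $t^\ast$ is Lemma~\ref{lem:alternative-bound}, and the band-plus-budget accounting is Lemmas~\ref{lem:conditioned-bound} and~\ref{lem:algoritm-bound}. The paper, however, does not carry an inductive invariant; it writes $\p{T_E^\ast\le |H|\mid\{p_h\}_{h\in F}}$ directly as a sum over iterations $t_0$, bounds each summand by $|\kappa_{t_0}^\ast\cap T|\,(p_{t_0}^{r\ast}-p_{t_0-1}^{r\ast})$ (with $\sigma_k^\ast$ in place of $p_k^{r\ast}$ at the terminal step $k=T_{\mathrm{False}}$), and then shows algebraically that this sum is at most $\alpha$ using the budget recursion.

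The one place your formulation risks failing is the conditioning. You phrase the invariant as a probability \emph{conditional on reaching iteration $t$ without a false rejection}. That event is $\{t^\ast\ge t\}$ and depends on the true-null p-values, so after conditioning on it you lose the super-uniformity of $\{p_h\}_{h\in T}$ that Assumption~\ref{as:independence} delivers; the band bound $\p{p_h\le\sigma}\le\sigma$ need not survive this conditioning. The paper avoids the issue by never conditioning on true-null events: it uses the inclusion $\{T_E^\ast=t_0\}\subseteq\{\exists\,h\in\kappa_{t_0}^\ast\cap T:\ p_{t_0-1}^{r\ast}<p_h\le p_{t_0}^{r\ast}\}$ and bounds the right-hand side unconditionally given $\{p_h\}_{h\in F}$. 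If you want to keep an inductive organization, state the invariant for the \emph{unconditional} tail $\p{t\le t^\ast<\infty\mid\mathcal G}$ rather than for $\p{t^\ast<\infty\mid\mathcal G,\ t^\ast\ge t}$; the resulting backward recursion is then exactly the paper's summation read in reverse, and your telescoping intuition about the $+p_t^r$ term becomes a literal cancellation.
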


Here, we discuss why Assumption~\ref{as:independence} is needed. We first remark that we reject the hypothesis $h_t^r$ at iteration $t$ if $p_t^r \leq \sigma_t$. In SPUR, because $\sigma_t$ relies on the p-values of the rejected hypotheses in the previous steps, it is burdensome to directly evaluate the probability of rejecting a true hypothesis, due to the complicated dependence of p-values between hypotheses. This characteristic makes the evaluation of FWER challenging for our procedure. We overcame this challenge using the following observations.

First, we remark that our testing framework is an iterative procedure. Thus, we can evaluate FWER by considering the probability of SPUR firstly rejecting a true hypothesis at an iteration. Actually, we show that FWER is equivalent to the sum of such a probability for all possible iterations. By viewing FWER this way, when evaluating the probability that SPUR firstly rejects a true hypothesis at iteration $T_E$, we can regard that all rejected hypotheses in previous steps $t < T_E$ are false hypotheses. Moreover, any variables at any iteration $t < T_E$, including $\delta_t, \sigma_t, H_t, h_t^r, p_t^r$, are determined only by the p-values of false hypotheses $\{p_h\} _{h \in F}$. Furthermore, $H_{T_E}, \delta_{T_E},$ and $\sigma_{T_E}$ are also determined only by $\{p_h\} _{h \in F}$ since these variables are obtained using only the variables of iteration $T_E - 1$. Hence, the only variable that relies on both the p-values of true and false hypotheses is $h_{T_E}^r$. Then, using Assumption~\ref{as:independence}, we can apply the Tarone's trick to control the probability of rejecting a true hypothesis in $H_{T_E}$ with due regard to the threshold $\sigma_{T_E}$ and show that FWER is upper-bounded by significant level $\alpha$.

Finally, it should be noted that our method can also be used as a general multiple testing framework for any problem that satisfies Assumption~\ref{as:independence}, such as multi-center studies or subset analysis in which the statistics between hypotheses are independent\cite{benjamini2000adaptive}.

\mysubsection{Utility guarantee}\label{fwer}
Next, we give a guarantee on the utility of the proposed method compared to the T-Bonferroni method in Theorem~\ref{theorem:lamp}.
\begin{restatable}[Comparison with T-Bonferroni]{theorem}{theoremlamp}\label{theorem:lamp}
  For arbitrary preference order $\prec$, let $R$ be rejected by SPUR and $R_\mathrm{Tarone}$ be the rejection set by T-Bonferroni; then $R \succeq R_\mathrm{Tarone}$.
\end{restatable}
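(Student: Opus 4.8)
The plan is to unfold $R \succeq R_\mathrm{Tarone}$ through Definition~\ref{def:comparation}: it suffices to show $D_u(R_\mathrm{Tarone} \| R) = 0$, that is, every maximal hypothesis $h \in dom(R_\mathrm{Tarone})$ admits some $h' \in dom(R)$ with $h \preceq h'$. Since T-Bonferroni rejects exactly those hypotheses with $p_h \le \sigma_\mathrm{Tarone}$, every such $h$ has $p_h \le \sigma_\mathrm{Tarone}$, so the entire argument reduces to controlling how SPUR treats hypotheses whose p-value does not exceed $\sigma_\mathrm{Tarone}$.

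First I would prove a monotonicity lemma stating that the SPUR thresholds never fall below the T-Bonferroni threshold, $\sigma_t \ge \sigma_\mathrm{Tarone}$ for every iteration $t$. The base case is exact: at $t=1$ we have $p_0^r = 0$, $\delta_1 = \alpha$, and $\kappa_1 = \kappa$, so $\sigma_1 = \sigma_\mathrm{Tarone}$. For the inductive step I would establish the stronger $\sigma_{t+1} \ge \sigma_t$ at each rejection. The engine is the identity obtained by inserting $\tau_t = \delta_t/(\sigma_t - p_{t-1}^r)$ into the budget update of line~\ref{al:modify-delta}, namely $\delta_{t+1} = \tau_t(\sigma_t - p_t^r) + p_t^r$. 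Feasibility of $\sigma_t$ gives $|\kappa_t(\sigma_t)| \le \tau_t$; the nesting $H_{t+1} \subseteq H_t$ gives $|\kappa_{t+1}(\sigma_t)| \le |\kappa_t(\sigma_t)|$; and $p_t^r \ge 0$ together with $p_t^r \le \sigma_t$ then yields $(\sigma_t - p_t^r)\,|\kappa_{t+1}(\sigma_t)| \le \tau_t(\sigma_t - p_t^r) \le \delta_{t+1}$. Thus $\sigma_t$ is feasible for the $(t+1)$-st threshold problem, so $\sigma_{t+1} \ge \sigma_t$, and the lemma follows by induction.

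Next I would use the lemma to rule out premature stopping. Whenever a candidate $h$ with $p_h \le \sigma_\mathrm{Tarone}$ still lies in $H_t$, we have $p_t^r = \min_{h'' \in H_t} p_{h''} \le p_h \le \sigma_\mathrm{Tarone} \le \sigma_t$, so the test $p_t^r \le \sigma_t$ succeeds and SPUR rejects instead of returning. Because the candidate set strictly shrinks at every rejection and $H$ is finite, SPUR terminates, and hence every hypothesis with p-value at most $\sigma_\mathrm{Tarone}$ must eventually leave $H_t$ either by being rejected or by being ignored at some rejection of $h_t^r$. Fixing $h \in dom(R_\mathrm{Tarone})$, if $h$ is rejected then $h \in R$; if $h$ is ignored at iteration $t$ it is because the rejected $h_t^r \in R$ dominates it, i.e. $h \preceq h_t^r$. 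Either way $h \preceq h'$ for some $h' \in R$, and since $R$ is finite and $\prec$ is a partial order, $h'$ sits below a maximal element $h'' \in dom(R)$, giving $h \preceq h''$ by transitivity. This is exactly $D_u(R_\mathrm{Tarone} \| R) = 0$, hence $R \succeq R_\mathrm{Tarone}$.

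The step I expect to be the main obstacle is the last one, specifically the handling of hypotheses incomparable to the rejected $h_t^r$. The proof needs the invariant that SPUR discards a surviving maximal element of $R_\mathrm{Tarone}$ only when it is genuinely dominated ($h \preceq h_t^r$); a maximal element that is merely incomparable to $h_t^r$ must be retained in the candidate set so that it is itself rejected later, since otherwise it would contribute to $D_u(R_\mathrm{Tarone} \| R)$ and defeat the claim. I would therefore argue carefully that the ignore step removes precisely the down-set $\{h : h \preceq h_t^r\}$ and keeps every hypothesis not below $h_t^r$, and check that this reading is consistent with the bookkeeping used in the monotonicity lemma, which relied only on the nesting $H_{t+1} \subseteq H_t$. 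Reconciling the ``ignore less useful'' semantics with the threshold/budget updates is the delicate point that links the two halves of the argument.
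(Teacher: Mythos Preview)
Your proposal is correct and follows essentially the paper's route: the key lemma $\sigma_{t+1}\ge\sigma_t$ (hence $\sigma_t\ge\sigma_1=\sigma_\mathrm{Tarone}$) is exactly Lemma~\ref{lem:sigmaincrease}, and your derivation of it matches the paper's. The one refinement in the paper's endgame is that, rather than allowing an $h^*\in dom(R_\mathrm{Tarone})$ to be ignored and then dominated, it shows such an $h^*$ is \emph{never} ignored, via the two-case split ``no $h\in R_\mathrm{Tarone}$ strictly dominates $h^*$ (maximality)'' and ``no $h\notin R_\mathrm{Tarone}$ has $p_h<p_{h^*}$ (since $p_{h^*}\le\sigma_\mathrm{Tarone}<p_h$)''; this yields the slightly stronger conclusion $dom(R_\mathrm{Tarone})\subseteq R$ and, in particular, dissolves your stated worry about incomparable hypotheses without having to track them through the ignore step.
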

\begin{proof}
  \vspace{-0.3em}
  Appendix~\ref{app:proofs}.
  \vspace{-0.2em}
\end{proof}

To show this theorem, we first claim that in SPUR, we always obtain a rejection threshold $\sigma_{t}$ that is not smaller than the rejection threshold of the last step $\sigma_{t-1}$.
\begin{restatable}[Monotonically increasing $\sigma_t$]{lemma}{lemsigmaincrease}\label{lem:sigmaincrease}
  Let $\sigma_0 = 0$, for any iteration $t \leq T_\mathrm{SPUR}$ of the SPUR process: $\sigma_{t-1} \leq \sigma_{t}$.
\end{restatable}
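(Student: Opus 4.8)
The plan is to argue by induction on $t$, showing at each step that the previous threshold $\sigma_{t-1}$ remains \emph{feasible} for the maximization that defines $\sigma_t$. Since $\sigma_t$ is by definition the largest feasible value, this immediately yields $\sigma_t \geq \sigma_{t-1}$. Concretely, the constraint defining $\sigma_t$ is $(\sigma - p_{t-1}^r)\,|\kappa_t(\sigma)| \leq \delta_t$, so it suffices to verify
\[
(\sigma_{t-1} - p_{t-1}^r)\,|\kappa_t(\sigma_{t-1})| \leq \delta_t .
\]
The base case $\sigma_0 = 0 \leq \sigma_1$ is immediate, because at the first iteration $\sigma = 0$ is always feasible (it makes the left-hand side zero, and $p_0^r = 0$).

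Two monotonicity facts set up the inductive step. First, a rejection shrinks the candidate set through $H_t = \{h \in H_{t-1} \cond h \succ h_{t-1}^r\} \subseteq H_{t-1}$, so the smallest attainable p-value can only grow, giving $p_{t-1}^r \geq p_{t-2}^r$; iterating, the rejected p-values satisfy $0 = p_0^r \leq p_1^r \leq p_2^r \leq \cdots$. Second, the same nesting $H_t \subseteq H_{t-1}$ yields $\kappa_t(\sigma) \subseteq \kappa_{t-1}(\sigma)$ for every $\sigma$, hence $|\kappa_t(\sigma_{t-1})| \leq |\kappa_{t-1}(\sigma_{t-1})|$. I will also use the rejection condition $p_{t-1}^r \leq \sigma_{t-1}$, which must have held in order to reach iteration $t$, and the fact that the budget stays positive so that $\sigma_{t-1} > p_{t-2}^r$ and $\tau_{t-1}$ is well defined.

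The crux is to rewrite the budget update into a form that exposes the comparison. Substituting $\tau_{t-1} = \delta_{t-1}/(\sigma_{t-1} - p_{t-2}^r)$ into the update at line~\ref{al:modify-delta} and collecting terms turns
\[
\delta_t = \delta_{t-1} - \tau_{t-1}(p_{t-1}^r - p_{t-2}^r) + p_{t-1}^r
\]
into the compact identity
\[
\delta_t = \delta_{t-1}\,\frac{\sigma_{t-1} - p_{t-1}^r}{\sigma_{t-1} - p_{t-2}^r} + p_{t-1}^r .
\]
Now I invoke the feasibility of $\sigma_{t-1}$ at the \emph{previous} iteration, namely $(\sigma_{t-1} - p_{t-2}^r)\,|\kappa_{t-1}(\sigma_{t-1})| \leq \delta_{t-1}$. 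Writing $n := |\kappa_{t-1}(\sigma_{t-1})|$ and $n' := |\kappa_t(\sigma_{t-1})| \leq n$, this bound lets me lower-bound $\delta_{t-1}$ in the identity above; since the multiplier $(\sigma_{t-1} - p_{t-1}^r)/(\sigma_{t-1} - p_{t-2}^r)$ is nonnegative (here I use $p_{t-1}^r \leq \sigma_{t-1}$), I obtain
\[
\delta_t \geq n(\sigma_{t-1} - p_{t-1}^r) + p_{t-1}^r \geq n'(\sigma_{t-1} - p_{t-1}^r),
\]
which is precisely the feasibility inequality for $\sigma_{t-1}$ at iteration $t$. Hence $\sigma_{t-1}$ belongs to the feasible set defining $\sigma_t$, and maximality gives $\sigma_t \geq \sigma_{t-1}$.

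I expect the main obstacle to be the bookkeeping around the testability counts: the inequality must be checked at $\sigma_{t-1}$ rather than at $\sigma_t$, and the step $|\kappa_t(\sigma_{t-1})| \leq |\kappa_{t-1}(\sigma_{t-1})|$ — that ignoring the less useful hypotheses can only remove testable candidates — is exactly what makes the previous-iteration budget sufficient to certify feasibility. The remaining care is purely in the sign conditions ($\sigma_{t-1} > p_{t-2}^r$, $\sigma_{t-1} \geq p_{t-1}^r$, and $p_{t-1}^r \geq 0$), which must be checked so that every inequality in the chain points in the required direction.
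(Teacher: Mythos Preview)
Your proposal is correct and follows essentially the same route as the paper: show that $\sigma_{t-1}$ is feasible for the maximization defining $\sigma_t$ by combining the nesting $|\kappa_t(\sigma_{t-1})|\leq|\kappa_{t-1}(\sigma_{t-1})|$ with the compact budget identity $\delta_t=\delta_{t-1}\frac{\sigma_{t-1}-p_{t-1}^r}{\sigma_{t-1}-p_{t-2}^r}+p_{t-1}^r$, then conclude by maximality. The paper's write-up (shifted one index) bounds $|\kappa_{t+1}(\sigma_t)|(\sigma_t-p_t^r)$ directly rather than lower-bounding $\delta_t$, but the two chains of inequalities are the same computation read in opposite directions.
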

\begin{proof}
  \vspace{-0.3em}
  Appendix~\ref{app:proofs}.
  \vspace{-0.2em}
\end{proof}

Recall that the proposed method always rejects the most significant hypothesis $h_t^r$ then ignores other less useful hypotheses in $H_t$ regarding $h_{t}^{r}$. Thus, the \textit{necessary condition} for a hypothesis $h^* \in H$ to be ignored is that there exists another hypothesis $h \in H$ that (1) is more significant than $h^*$, i.e., $p_h < p_h^*$, and (2) is at least as useful as $h^*$, i.e., $h \succeq h^*$. In our proof, we show that this condition does not hold for any hypothesis $h^*$ included in the dominating set $dom(R_\mathrm{Tarone})$ discovered by T-Bonferroni. Moreover, using Lemma~\ref{lem:sigmaincrease} and the fact that the $\sigma_1 = \sigma_{Tarone}$, we can show that SPUR always rejects all the hypotheses in $dom(R_\mathrm{Tarone})$. Thus, $R \supseteq dom(R_\mathrm{Tarone})$ and Theorem~\ref{theorem:lamp} can then be shown using the utility comparison between hypotheses sets.

This guarantee of utility improvement is the most critical advantage of SPUR compared to the weighted approaches. As discussed in Section~\ref{limitation}, the discovery result by weighted approaches can be heavily affected if the false hypotheses are assigned small weights. In the worst case, the weighted approach would be inferior to methods that do not consider utility, e.g., T-Bonferroni. In contrast, although SPUR is also expected to achieve better performance when the utility of the false hypotheses is high, it still guarantees a utility that is not less useful than T-Bonferroni even when the utility of false hypotheses is low.

\mysection{Experiment\label{ch:experiment}}
In this section, we evaluate SPUR using a synthetic experiment and three real-world datasets.

\mysubsection{Synthetic experiment}\label{synthetic-experiment-1}
We conduct the first synthetic experiment with two goals. First, we verify if SPUR can adequately control FWER. Particularly, we demonstrate that adjusting the significance budget by $\delta_{t+1} = \delta_t - \tau_t(p_t^r - p_{t-1}^r) + p_t^r$ is necessary to control the FWER by showing that obtaining the significance budget as $\delta_{t+1} = \delta_t$ would violate the FWER.\@ Second, we verify if SPUR can correctly reject hypotheses with higher utility compared to other methods. We confirm our discussion in Section~\ref{fwer} about the limitation of the weighted approaches.

\mysubsubsection{Experiment setting:}
We consider a set of $100$ hypotheses $H=\{h_1, h_2, \dots, h_{100}\}$, where $20$ of them are false null hypotheses. Let $F$ be the set of false hypotheses and let the preference be $h_i \succ h_j$ for $i < j$, i.e., the hypothesis with a smaller index be more useful. We consider the following three settings where each is named by the usefulness of the false hypotheses set.

\begin{tabular}{ll}
  (1) $F$ has High utility  &: $F = \{h_1, h_2, \dots, h_{20}\}$ \\
  (2) $F$ has Medium utility&: $F = \{h_1, h_{6}, \dots, h_{96}\}$ \\
  (3) $F$ has Low utility   &: $F = \{h_{81}, h_{82}, \dots h_{100}\}$.
\end{tabular}

Since we do not consider the Tarone's trick in this experiment, we adopt the z-test and set up the null hypothesis $H_0: \mu = 0$ for all hypotheses. Moreover, we generate a dataset $D_i=\{x_{ij}\} _{j=1}^{20}$ with $20$ samples for each hypothesis $h_i$, where $x_{ij} \sim \mathcal{N}(\mu = 0, \sigma=0.75)$ if $h_i \in T$ and $x_{ij} \sim \mathcal{N}(\mu = 0.5, \sigma=0.75)$ if $h_i \in F$. We remark that the p-values obtained in this setting satisfy Assumption~\ref{as:independence} although this setting does not employ Fisher's exact test.

\mysubsubsection{Comparative methods and criterion}
We compare our method with the following three methods.
\begin{itemize}
  \item Bonferroni: the T-Bonferroni method with $\sigma = \alpha / |H|$
  \item w-Bonferroni: the weighted Bonferroni method where hypothesis $h_i$'s weight $w_i$ is assigned to be the number of hypotheses that are less useful than $h_i$, i.e., $w_i = 100 - i + 1$.
  \item invalid-SPUR:\@ the version of SPUR where $\delta_{t+1} = \delta_t$
\end{itemize}
For each setting, we generated $10{,}000$ datasets, applied each generated dataset to the four methods, and then evaluated the FWER and the utility of the rejected set $R$ (only for the runs with no type-I error). To evaluate utility, we first define $rank(h)$ as the utility ranking of a false hypothesis $h$ within the false hypothesis set $F$. For example, in the \textit{medium} utility setting, we have $rank(h_1)=1, rank(h_6)=2$. Here, the smaller the $rank(.)$, the more useful that rejected set is. The utility of the rejected set $R$ is the ranking of the most useful hypothesis in $R$.

\mysubsubsection{Results and discussion}
\begin{table}[t]
  \centering
  \caption{FWER and the average number of rejections.}\label{tab:fwer}
  \vspace{-1em}
  \resizebox{1.0\linewidth}{!}{%
  \begin{tabular}{lcccccc}
    \toprule
                 & \multicolumn{3}{c}{FWER} & \multicolumn{3}{c}{Average number of rejects}   \\
    \cmidrule(lr){2-4} \cmidrule(lr){5-7}
                 & High                     & Medium         & Low   & High  & Medium & Low   \\ \midrule
    SPUR         & 0.006                    & 0.042          & 0.048 & 3.157 & 2.265  & 1.934 \\
    Bonferroni   & 0.039                    & 0.041          & 0.041 & 3.493 & 3.488  & 3.480 \\
    w-Bonferroni & 0.032                    & 0.038          & 0.049 & 4.504 & 3.199  & 1.803 \\
    invalid-SPUR & 0.006                    & \textbf{0.056} & 0.048 & 3.268 & 2.378  & 1.947 \\
    \bottomrule
   \end{tabular}}
   \vspace{-0.5em}
\end{table}

The FWER and the average number of rejections according to the four methods are given in Table~\ref{tab:fwer}. From the table, we observe that Bonferroni, w-Bonferroni, and SPUR succeed in controlling the FWER in all three settings. On the other hand, invalid-SPUR failed to control the FWER for the \textit{medium} utility setting. This result demonstrates that management of the significance budget $\delta_t$ at line~\ref{al:modify-delta} of SPUR is necessary for controlling the FWER.\@

\begin{figure}[h]
  \centering
  \vspace{-1em}
  \includegraphics[width=0.95\linewidth]{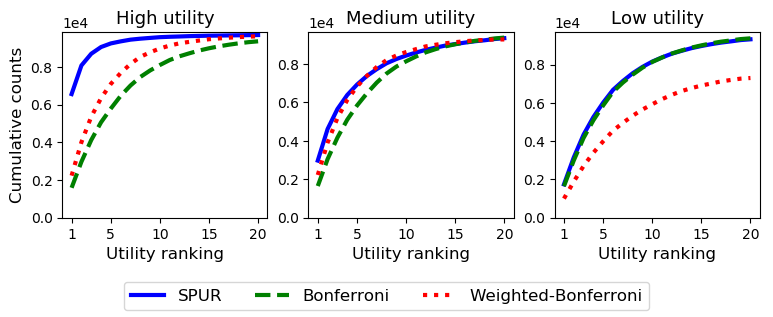}
  \vspace{-1em}
  \caption{Cumulative histogram of utility rank.}\label{fig:usefulness}
  \vspace{-1em}
\end{figure}

Figure~\ref{fig:usefulness} is a cumulative histogram of the utility rank $rank(R)$ of the rejection set $R$ by the three methods: Bonferroni, w-Bonferroni, and SPUR after $10{,}000$ runs. For each possible value $k \in \{1, \dots, 20\}$ of $rank(R)$, we count how many times each method obtained a rejection set $R$ so that $rank(R) \leq k$. For example, considering $k=2$ in the \textit{medium} utility setting, we count how many times each method rejected hypothesis $h_1$ or $h_6$ correctly. In Figure~\ref{fig:usefulness}, the x- and y-axes represent $k$ and the count of correct rejection at value $k$, respectively.

First, in all the settings, SPUR always achieved a rejection set that is not less useful than the Bonferroni rejection set, as is guaranteed in Theorem~\ref{theorem:lamp}. Moreover, we observed that SPUR could achieve a rejection set with a higher utility if the false hypotheses have high utility. This is because when the utility of false hypotheses is high, once a false hypothesis gets rejected, SPUR can ignore many non-useful hypotheses, which are both less useful and less significant than the rejected false hypothesis. As a result, SPUR can achieve a rejected set with higher utility.

On the other side, w-Bonferroni, which assigns larger weights for hypotheses with higher utility, works well in the \textit{high} and \textit{medium} utility settings. However, in the \textit{low} utility setting, this method performs poorly and even worse than the Bonferroni. This can be explained in terms of the dependence between the assigned weight and the p-value in the weighted approach. In particular, if the weights of the false hypotheses are small, their rejection thresholds could become too strict, and these hypotheses could not be rejected. In contrast, SPUR guarantees that the utility of the rejected set is at least useful as the result of Bonferroni even in the \textit{low} utility setting.

\mysubsection{Real-world datasets}~\label{real-data}
We conduct real-world data experiments to confirm if SPUR can discover more useful patterns in a real-world situation where the Tarone's trick must be considered. We adopted the three datasets Adult\cite{data-adult}, Crash Report\cite{data-crash}, and Crime\cite{data-crime} for five mining tasks.

\mysubsubsection{Data preprocessing and task setup}
All three datasets consist of some explanatory variables and a target class. To elaborate, the Crash Report dataset\cite{data-crash} is a dataset of traffic accidents, including information about speed limits, weather, light-conditions, etc., and a class $y=1$ indicates if someone was injured. The Crime dataset\cite{data-crime} consists of criminal records with information about the place, street, and time along with a class for the crime category: against a person, property, or society. For this dataset, we set up three tasks using each of these categories as the target class. Finally, the Adult dataset\cite{data-adult} contains several demographic attributes and a class $y=1$ of income $>50$K for different individuals.

To set up the candidate pattern set and the null hypotheses in each dataset, we focused on some variables and defined the mining task based on these variables. We translate the values of these variables into items by categorizing them with some predefined rules. For example, the \textit{hours-per-week} attribute in the Adult dataset is categorized into $[<20], [20-29], \dots, [\geq 60]$. A valid pattern is a combination of all explanatory variables, with one item for each variable. This setup guarantees that there is no overlap of samples between patterns, and we adopted the Fisher's exact test for all five tasks (one-sided for the Crash and Adult datasets, two-sided for the Crime dataset). Our goal is to discover the combinations of levels that are significantly associated with the target class. Details about the adopted variables are shown in Table~\ref{tab:variable-description} of Appendix~\ref{data}.

To define the ordinal utility between patterns, we first divide the explanatory variables into two groups: utility variables and family variables. Specifically, we state $K \succ K'$ if their family variables are identical, and the utility variables of $K$ is more useful than of $K'$. For example, for the Adult dataset, we let \textit{(sex, work-class, and occupation)} be the family variables and \textit{hours-per-week, education} be the utility variables. Here, based on prior knowledge that ``a higher income correlates with higher education level and more work hours'', we oppositely prefer patterns with \textit{lower education level} and \textit{less work hours}. That is, we aim to discover patterns that are unexpectedly associated with the class $>50$K. Furthermore, the family variables are also effective in finding useful patterns in various situations of \textit{(sex, work-class, and occupation)}. In addition, we prefer \textit{sounded safe} for the Crash task and \textit{occurs in midday} patterns for the Crime dataset.

\mysubsubsection{Results and discussion}
\begin{table}[t]
  \centering
  \caption{Number of patterns discovered by T-Bonferroni $|R_T|$ and SPUR $|R_S|$ along with the utility measure $D_u = D_u(R_S \| R_T)$, for different $\alpha$}\label{tab:num-reject}
  \vspace{-1em}
  \resizebox{1.0\linewidth}{!}{%
  \begin{tabular}{lrrrrrrrrr}
    \toprule
    & \multicolumn{3}{c}{$\alpha=0.01$} & \multicolumn{3}{c}{$\alpha=0.05$} & \multicolumn{3}{c}{$\alpha=0.1$}\\
    \cmidrule(lr){2-4} \cmidrule(lr){5-7} \cmidrule(lr){8-10}
                 & $|R_T|$ & $|R_S|$ & $D_u$       & $|R_T|$ & $|R_S|$ & $D_u$      & $|R_T|$ & $|R_S|$ & $D_u$       \\ \midrule
        Crash    & 83      & 75      & \textbf{12} & 120     & 94      & \textbf{6} & 136     & 106     & \textbf{13} \\
        Adult    & 45      & 26      & \textbf{0}  & 57      & 32      & \textbf{2} & 63      & 36      & \textbf{3}  \\
        Property & 275     & 142     & \textbf{10} & 315     & 156     & \textbf{8} & 336     & 164     & \textbf{10} \\
        Person   & 133     & 73      & \textbf{2}  & 156     & 83      & \textbf{2} & 163     & 87      & \textbf{1}  \\
        Society  & 320     & 142     & \textbf{3}  & 348     & 156     & \textbf{7} & 361     & 164     & \textbf{9}  \\
    \bottomrule
    \end{tabular}}
    \vspace{-1em}
\end{table}

In Table~\ref{tab:num-reject}, we show the number of discoveries $|R_\mathrm{S}|$ by SPUR, $|R_\mathrm{T}|$ by T-Bonferroni, and the utility measure $D_u(R_\mathrm{S} \| R_\mathrm{T})$, for different significance levels $\alpha$. We do not show $D_u(R_\mathrm{T} \| R_\mathrm{S})$, because $D_u(R_\mathrm{T} \| R_\mathrm{S}) = 0 $ for all settings, i.e., because SPUR always discovered a pattern set that is not less useful than T-Bonferroni, as is guaranteed by Theorem~\ref{theorem:lamp}. Especially, $D_u$ is large for the Crash, Property, and Society tasks, indicating that SPUR discovered a large number of high-utility patterns that cannot be discovered by T-Bonferroni.

On the other hand, the utility measure $D_u(R_\mathrm{S} \| R_\mathrm{T})$ is comparatively small for the Adult and Person tasks. We explain this by considering the number of discovered patterns $|R_T|$ by T-Bonferroni. First, in these tasks, $|R_T|$ is small compared to other tasks. Moreover, even when we increase the significance level $\alpha$, only a few additional patterns are discovered. In such a case, since the number of discoverable patterns is small in the first place, even if SPUR can achieve a larger rejection threshold by ignoring less useful hypotheses, the number of newly discovered patterns would not increase. In other words, for cases such as the Crash, Property, and Society tasks, SPUR is especially functional because many useful true patterns were not discovered by T-Bonferroni due to the FWER constraint, while there are many patterns in $R_T$ which are not useful.

Next, we focus on the discovery process of two methods by using the Adult task as an example. In Figure~\ref{fig:rejections}, we show the discoveries by two methods with $\alpha = 0.05$ for two families (Male, Self-emp, Prof-specialty) and (Male, Private, Exec-managerial). We also show the sorted indexes of p-values for discovered patterns by each method. As we can see from Figure~\ref{fig:rejections}, by considering both the significance and the utility of discovered patterns, SPUR can efficiently expand its dominance of patterns with fewer members of rejection. By contrast, with no consideration of utility, T-Bonferroni wastes its significance budget in rejecting less useful hypotheses. As a result, SPUR can discover more useful patterns that T-Bonferroni fails to discover. This advantage is not limited to discovering patterns in the same family, but is also helpful in discovering patterns of other families. In fact, we obtained a number of pattern families that only SPUR could discover.

In Figure~\ref{fig:rejections}, as discovered by SPUR, a male (private) executive/manager who works $40$--$50$ hours a week is likely to have an income $>50$K even if he has just graduated from high school, while the required education level found by T-Bonferroni is to graduate from a college. Moreover, a male who graduated from a professional school and is working in a (self-employed) professional specialty is likely to have an income of $>50$K even if he only works for $30$--$40$ hours a week, while the requirement discovered by T-Bonferroni is $40$--$50$ hours a week. Other than those patterns, in the Crash dataset, SPUR discovered many patterns that seemed safe but are still associated with an accident and injury, which cannot be discovered by T-Bonferroni. In this way, SPUR can discover patterns that are significant, and at the same time, preferred by the user.

\mysection{Conclusion\label{ch:conclusion}}

In this study, we focused on the utility of patterns in the SSPM context. We introduced the problem of discovering statistically significant patterns with the highest utility, giving the ordinal utility between patterns. We proposed an iterative multiple-testing framework that alternately rejects a hypothesis and safely ignores hypotheses which are less useful than the rejected one. This enables the discovery of more useful patterns. We theoretically showed the FWER guarantee under a particular assumption and the utility guarantee of the proposed method. Finally, we conducted several experiments with both synthetic and real-world datasets to demonstrate that the proposed method is capable of discovering more useful patterns under the constraint of type-I error.

\begin{acks}
This work was partly supported by KAKENHI (Grants-in-Aid for Scientific Research) Grant Numbers JP19H04164 and JP18H04099. We are also grateful to Professor Takeuchi for providing useful discussions and advices.
\end{acks}

\renewcommand*{\bibfont}{\small}
\bibliographystyle{ACM-Reference-Format}
\bibliography{reference}


\begin{thebibliography}{24}


\ifx \showCODEN    \undefined \def \showCODEN     #1{\unskip}     \fi
\ifx \showDOI      \undefined \def \showDOI       #1{#1}\fi
\ifx \showISBNx    \undefined \def \showISBNx     #1{\unskip}     \fi
\ifx \showISBNxiii \undefined \def \showISBNxiii  #1{\unskip}     \fi
\ifx \showISSN     \undefined \def \showISSN      #1{\unskip}     \fi
\ifx \showLCCN     \undefined \def \showLCCN      #1{\unskip}     \fi
\ifx \shownote     \undefined \def \shownote      #1{#1}          \fi
\ifx \showarticletitle \undefined \def \showarticletitle #1{#1}   \fi
\ifx \showURL      \undefined \def \showURL       {\relax}        \fi
\providecommand\bibfield[2]{#2}
\providecommand\bibinfo[2]{#2}
\providecommand\natexlab[1]{#1}
\providecommand\showeprint[2][]{arXiv:#2}

\bibitem[\protect\citeauthoryear{Agrawal, Srikant, et~al\mbox{.}}{Agrawal
  et~al\mbox{.}}{1994}]%
        {agrawal1994fast}
\bibfield{author}{\bibinfo{person}{Rakesh Agrawal},
  \bibinfo{person}{Ramakrishnan Srikant}, {et~al\mbox{.}}}
  \bibinfo{year}{1994}\natexlab{}.
\newblock \showarticletitle{Fast algorithms for mining association rules}. In
  \bibinfo{booktitle}{\emph{Proc. 20th int. conf. very large data bases,
  VLDB}}, Vol.~\bibinfo{volume}{1215}. \bibinfo{pages}{487--499}.
\newblock


\bibitem[\protect\citeauthoryear{Benjamini and Hochberg}{Benjamini and
  Hochberg}{2000}]%
        {benjamini2000adaptive}
\bibfield{author}{\bibinfo{person}{Yoav Benjamini} {and} \bibinfo{person}{Yosef
  Hochberg}.} \bibinfo{year}{2000}\natexlab{}.
\newblock \showarticletitle{On the adaptive control of the false discovery rate
  in multiple testing with independent statistics}.
\newblock \bibinfo{journal}{\emph{Journal of educational and Behavioral
  Statistics}} \bibinfo{volume}{25}, \bibinfo{number}{1}
  (\bibinfo{year}{2000}), \bibinfo{pages}{60--83}.
\newblock


\bibitem[\protect\citeauthoryear{Fisher et~al\mbox{.}}{Fisher
  et~al\mbox{.}}{1950}]%
        {fisher1950statistical}
\bibfield{author}{\bibinfo{person}{Ronald~Aylmer Fisher} {et~al\mbox{.}}}
  \bibinfo{year}{1950}\natexlab{}.
\newblock \showarticletitle{Statistical methods for research workers.}
\newblock \bibinfo{journal}{\emph{Statistical methods for research workers.}}
  \bibinfo{number}{llth ed. revised} (\bibinfo{year}{1950}).
\newblock


\bibitem[\protect\citeauthoryear{Gan, Lin, Fournier-Viger, Chao, Tseng, and
  Yu}{Gan et~al\mbox{.}}{2018}]%
        {gan2018survey}
\bibfield{author}{\bibinfo{person}{Wensheng Gan}, \bibinfo{person}{Jerry
  Chun-Wei Lin}, \bibinfo{person}{Philippe Fournier-Viger},
  \bibinfo{person}{Han-Chieh Chao}, \bibinfo{person}{Vincent~S Tseng}, {and}
  \bibinfo{person}{Philip~S Yu}.} \bibinfo{year}{2018}\natexlab{}.
\newblock \showarticletitle{A survey of utility-oriented pattern mining}.
\newblock \bibinfo{journal}{\emph{arXiv preprint arXiv:1805.10511}}
  (\bibinfo{year}{2018}).
\newblock


\bibitem[\protect\citeauthoryear{H{\"a}m{\"a}l{\"a}inen}{H{\"a}m{\"a}l{\"a}inen}{2012}]%
        {hamalainen2012kingfisher}
\bibfield{author}{\bibinfo{person}{Wilhelmiina H{\"a}m{\"a}l{\"a}inen}.}
  \bibinfo{year}{2012}\natexlab{}.
\newblock \showarticletitle{Kingfisher: an efficient algorithm for searching
  for both positive and negative dependency rules with statistical significance
  measures}.
\newblock \bibinfo{journal}{\emph{Knowledge and information systems}}
  \bibinfo{volume}{32}, \bibinfo{number}{2} (\bibinfo{year}{2012}),
  \bibinfo{pages}{383--414}.
\newblock


\bibitem[\protect\citeauthoryear{Holm}{Holm}{1979}]%
        {holm1979simple}
\bibfield{author}{\bibinfo{person}{Sture Holm}.}
  \bibinfo{year}{1979}\natexlab{}.
\newblock \showarticletitle{A simple sequentially rejective multiple test
  procedure}.
\newblock \bibinfo{journal}{\emph{Scandinavian journal of statistics}}
  (\bibinfo{year}{1979}), \bibinfo{pages}{65--70}.
\newblock


\bibitem[\protect\citeauthoryear{Kohavi and barry Becker}{Kohavi and barry
  Becker}{1996}]%
        {data-adult}
\bibfield{author}{\bibinfo{person}{Ronny Kohavi} {and} \bibinfo{person}{barry
  Becker}.} \bibinfo{year}{1996}\natexlab{}.
\newblock \bibinfo{title}{Adult Data Set}.
\newblock
\newblock


\bibitem[\protect\citeauthoryear{Komiyama, Ishihata, Arimura, Nishibayashi, and
  Minato}{Komiyama et~al\mbox{.}}{2017}]%
        {komiyama2017statistical}
\bibfield{author}{\bibinfo{person}{Junpei Komiyama}, \bibinfo{person}{Masakazu
  Ishihata}, \bibinfo{person}{Hiroki Arimura}, \bibinfo{person}{Takashi
  Nishibayashi}, {and} \bibinfo{person}{Shin-ichi Minato}.}
  \bibinfo{year}{2017}\natexlab{}.
\newblock \showarticletitle{Statistical emerging pattern mining with multiple
  testing correction}. In \bibinfo{booktitle}{\emph{Proceedings of the 23rd ACM
  SIGKDD International Conference on Knowledge Discovery and Data Mining}}.
  ACM, \bibinfo{pages}{897--906}.
\newblock


\bibitem[\protect\citeauthoryear{Lee, Park, and Moon}{Lee
  et~al\mbox{.}}{2013}]%
        {lee2013utility}
\bibfield{author}{\bibinfo{person}{Dongwon Lee}, \bibinfo{person}{Sung-Hyuk
  Park}, {and} \bibinfo{person}{Songchun Moon}.}
  \bibinfo{year}{2013}\natexlab{}.
\newblock \showarticletitle{Utility-based association rule mining: A marketing
  solution for cross-selling}.
\newblock \bibinfo{journal}{\emph{Expert Systems with applications}}
  \bibinfo{volume}{40}, \bibinfo{number}{7} (\bibinfo{year}{2013}),
  \bibinfo{pages}{2715--2725}.
\newblock


\bibitem[\protect\citeauthoryear{Llinares-L{\'o}pez, Papaxanthos, Bodenham,
  Roqueiro, Investigators, and Borgwardt}{Llinares-L{\'o}pez
  et~al\mbox{.}}{2017}]%
        {llinares2017genome}
\bibfield{author}{\bibinfo{person}{Felipe Llinares-L{\'o}pez},
  \bibinfo{person}{Laetitia Papaxanthos}, \bibinfo{person}{Dean Bodenham},
  \bibinfo{person}{Damian Roqueiro}, \bibinfo{person}{COPDGene Investigators},
  {and} \bibinfo{person}{Karsten Borgwardt}.} \bibinfo{year}{2017}\natexlab{}.
\newblock \showarticletitle{Genome-wide genetic heterogeneity discovery with
  categorical covariates}.
\newblock \bibinfo{journal}{\emph{Bioinformatics}} \bibinfo{volume}{33},
  \bibinfo{number}{12} (\bibinfo{year}{2017}), \bibinfo{pages}{1820--1828}.
\newblock


\bibitem[\protect\citeauthoryear{Llinares-L{\'o}pez, Sugiyama, Papaxanthos, and
  Borgwardt}{Llinares-L{\'o}pez et~al\mbox{.}}{2015}]%
        {llinares2015fast}
\bibfield{author}{\bibinfo{person}{Felipe Llinares-L{\'o}pez},
  \bibinfo{person}{Mahito Sugiyama}, \bibinfo{person}{Laetitia Papaxanthos},
  {and} \bibinfo{person}{Karsten Borgwardt}.} \bibinfo{year}{2015}\natexlab{}.
\newblock \showarticletitle{Fast and memory-efficient significant pattern
  mining via permutation testing}. In \bibinfo{booktitle}{\emph{Proceedings of
  the 21th ACM SIGKDD international conference on knowledge discovery and data
  mining}}. ACM, \bibinfo{pages}{725--734}.
\newblock


\bibitem[\protect\citeauthoryear{Minato, Uno, Tsuda, Terada, and Sese}{Minato
  et~al\mbox{.}}{2014}]%
        {minato2014fast}
\bibfield{author}{\bibinfo{person}{Shin-ichi Minato}, \bibinfo{person}{Takeaki
  Uno}, \bibinfo{person}{Koji Tsuda}, \bibinfo{person}{Aika Terada}, {and}
  \bibinfo{person}{Jun Sese}.} \bibinfo{year}{2014}\natexlab{}.
\newblock \showarticletitle{A fast method of statistical assessment for
  combinatorial hypotheses based on frequent itemset enumeration}. In
  \bibinfo{booktitle}{\emph{Joint European Conference on Machine Learning and
  Knowledge Discovery in Databases}}. Springer, \bibinfo{pages}{422--436}.
\newblock


\bibitem[\protect\citeauthoryear{Pellegrina, Riondato, and Vandin}{Pellegrina
  et~al\mbox{.}}{2019}]%
        {pellegrina2019spumante}
\bibfield{author}{\bibinfo{person}{Leonardo Pellegrina},
  \bibinfo{person}{Matteo Riondato}, {and} \bibinfo{person}{Fabio Vandin}.}
  \bibinfo{year}{2019}\natexlab{}.
\newblock \showarticletitle{SPuManTE: Significant Pattern Mining with
  Unconditional Testing}.
\newblock


\bibitem[\protect\citeauthoryear{Rosenthal and Rubin}{Rosenthal and
  Rubin}{1983}]%
        {rosenthal1983ensemble}
\bibfield{author}{\bibinfo{person}{Robert Rosenthal} {and}
  \bibinfo{person}{Donald~B Rubin}.} \bibinfo{year}{1983}\natexlab{}.
\newblock \showarticletitle{Ensemble-adjusted p values.}
\newblock \bibinfo{journal}{\emph{Psychological Bulletin}}
  \bibinfo{volume}{94}, \bibinfo{number}{3} (\bibinfo{year}{1983}),
  \bibinfo{pages}{540}.
\newblock


\bibitem[\protect\citeauthoryear{Service}{Service}{2015}]%
        {data-crime}
\bibfield{author}{\bibinfo{person}{MCG~ESB Service}.}
  \bibinfo{year}{2015}\natexlab{}.
\newblock \bibinfo{title}{Crime}.
\newblock
\newblock


\bibitem[\protect\citeauthoryear{Service}{Service}{2017}]%
        {data-crash}
\bibfield{author}{\bibinfo{person}{MCG~ESB Service}.}
  \bibinfo{year}{2017}\natexlab{}.
\newblock \bibinfo{title}{Crash Reporting - Drivers Data}.
\newblock
\newblock


\bibitem[\protect\citeauthoryear{Shie, Philip, and Tseng}{Shie
  et~al\mbox{.}}{2012}]%
        {shie2012efficient}
\bibfield{author}{\bibinfo{person}{Bai-En Shie}, \bibinfo{person}{S~Yu Philip},
  {and} \bibinfo{person}{Vincent~S Tseng}.} \bibinfo{year}{2012}\natexlab{}.
\newblock \showarticletitle{Efficient algorithms for mining maximal high
  utility itemsets from data streams with different models}.
\newblock \bibinfo{journal}{\emph{Expert Systems with Applications}}
  \bibinfo{volume}{39}, \bibinfo{number}{17} (\bibinfo{year}{2012}),
  \bibinfo{pages}{12947--12960}.
\newblock


\bibitem[\protect\citeauthoryear{Sugiyama and Borgwardt}{Sugiyama and
  Borgwardt}{2019}]%
        {sugiyama2019finding}
\bibfield{author}{\bibinfo{person}{Mahito Sugiyama} {and}
  \bibinfo{person}{Karsten Borgwardt}.} \bibinfo{year}{2019}\natexlab{}.
\newblock \showarticletitle{Finding statistically significant interactions
  between continuous features}. In \bibinfo{booktitle}{\emph{Proceedings of the
  28th International Joint Conference on Artificial Intelligence}}. AAAI Press,
  \bibinfo{pages}{3490--3498}.
\newblock


\bibitem[\protect\citeauthoryear{Tarone}{Tarone}{1990}]%
        {tarone1990modified}
\bibfield{author}{\bibinfo{person}{Robert~E Tarone}.}
  \bibinfo{year}{1990}\natexlab{}.
\newblock \showarticletitle{A modified Bonferroni method for discrete data}.
\newblock \bibinfo{journal}{\emph{Biometrics}} (\bibinfo{year}{1990}),
  \bibinfo{pages}{515--522}.
\newblock


\bibitem[\protect\citeauthoryear{Terada, Okada-Hatakeyama, Tsuda, and
  Sese}{Terada et~al\mbox{.}}{2013}]%
        {terada2013statistical}
\bibfield{author}{\bibinfo{person}{Aika Terada}, \bibinfo{person}{Mariko
  Okada-Hatakeyama}, \bibinfo{person}{Koji Tsuda}, {and} \bibinfo{person}{Jun
  Sese}.} \bibinfo{year}{2013}\natexlab{}.
\newblock \showarticletitle{Statistical significance of combinatorial
  regulations}.
\newblock \bibinfo{journal}{\emph{Proceedings of the National Academy of
  Sciences}} \bibinfo{volume}{110}, \bibinfo{number}{32}
  (\bibinfo{year}{2013}), \bibinfo{pages}{12996--13001}.
\newblock


\bibitem[\protect\citeauthoryear{Terada, Tsuda, et~al\mbox{.}}{Terada
  et~al\mbox{.}}{2016}]%
        {terada2016significant}
\bibfield{author}{\bibinfo{person}{Aika Terada}, \bibinfo{person}{Koji Tsuda},
  {et~al\mbox{.}}} \bibinfo{year}{2016}\natexlab{}.
\newblock \showarticletitle{Significant pattern mining with confounding
  variables}. In \bibinfo{booktitle}{\emph{Pacific-Asia Conference on Knowledge
  Discovery and Data Mining}}. Springer, \bibinfo{pages}{277--289}.
\newblock


\bibitem[\protect\citeauthoryear{Webb}{Webb}{2007}]%
        {webb2007discovering}
\bibfield{author}{\bibinfo{person}{Geoffrey~I Webb}.}
  \bibinfo{year}{2007}\natexlab{}.
\newblock \showarticletitle{Discovering significant patterns}.
\newblock \bibinfo{journal}{\emph{Machine learning}} \bibinfo{volume}{68},
  \bibinfo{number}{1} (\bibinfo{year}{2007}), \bibinfo{pages}{1--33}.
\newblock


\bibitem[\protect\citeauthoryear{Webb and Petitjean}{Webb and
  Petitjean}{2016}]%
        {webb2016multiple}
\bibfield{author}{\bibinfo{person}{Geoffrey~I Webb} {and}
  \bibinfo{person}{Fran{\c{c}}ois Petitjean}.} \bibinfo{year}{2016}\natexlab{}.
\newblock \showarticletitle{A multiple test correction for streams and cascades
  of statistical hypothesis tests}. In \bibinfo{booktitle}{\emph{Proceedings of
  the 22nd ACM SIGKDD International Conference on Knowledge Discovery and Data
  Mining}}. ACM, \bibinfo{pages}{1255--1264}.
\newblock


\bibitem[\protect\citeauthoryear{Yao and Hamilton}{Yao and Hamilton}{2006}]%
        {yao2006mining}
\bibfield{author}{\bibinfo{person}{Hong Yao} {and} \bibinfo{person}{Howard~J
  Hamilton}.} \bibinfo{year}{2006}\natexlab{}.
\newblock \showarticletitle{Mining itemset utilities from transaction
  databases}.
\newblock \bibinfo{journal}{\emph{Data \& Knowledge Engineering}}
  \bibinfo{volume}{59}, \bibinfo{number}{3} (\bibinfo{year}{2006}).
\newblock


\end{thebibliography}

\clearpage
\appendix
\mysection{Appendix: Proof of Theorem~\ref{theorem:fwer}}\label{app:proof_fwer}
To show this theorem, we first define the stochastic process obtained by the proposed method SPUR. 
\begin{restatable}[SPUR process]{definition}{defspurprocess} \label{def:spur-process}
  \begin{longver}
    Let $p_0^r = 0, R_1 = \emptyset, \delta_1 = \alpha, H_1 = H$, a stochastic process $\{X_t\}_{t=1}^{T_\mathrm{SPUR}}$ with $X_t = (\delta_t, R_t, p_t^r, h_t^r, \sigma_t, \tau_t)$ is said to be a SPUR process with stopping time $T_\mathrm{SPUR}$ if:
    \begin{align}
      \text{for } t > 1: \quad    
      &H_{t}       = \{h \in H | \forall_{h^r \in R_{t-1}}, h > h^r\} \\
      &\delta_{t}  = \delta_{t-1} - \tau_{t-1}(p_{t-1}^r - p_{t-1}^r) + p_{t-1}^r \\
      \text{for } t \geq 1: \quad 
      & p_t^r   = \min_{h \in H_t} p_h \cm h_t^r   = \text{argmin}_{h \in H_t} p_h \\
      & \sigma_t= \max\{\sigma : (\sigma- p_{t-1}^r)|\kappa_t(\sigma|) \leq \delta_t\} \\
      & \tau_t  = \delta_t / (\sigma_t - p_{t-1}^r) \\
      & R_{t}   = 
      \begin{cases}
        R_{t-1} \cap \{h_{t-1}^r\} \quad & (p_t^r \leq \sigma_t) \\
        R_{t-1} \quad & (otherwise)
      \end{cases}                                
    \end{align}
    and $T_\mathrm{SPUR} = \min\{t \in [|H|]: p_t^r > \sigma_t\}$
  \end{longver}

  \begin{shortver}
    A SPUR (stochastic) process is a stochastic process $\{X_t\}_{t=1}^{T_\mathrm{SPUR}}$ stopped at stopping time $T_\mathrm{SPUR}$ where $X_t = (\delta_t, R_t, p_t^r, h_t^r, \sigma_t, \tau_t)$ are obtained from the algorithm SPUR and $T_\mathrm{SPUR} = \min\{t \in [|H|]: p_t^r > \sigma_t\}$.
  \end{shortver}  
\end{restatable}

We remark that since we can obtain $H_t = \{h \in H | \forall_{h^r \in R_{t-1}}, h \succ h^r\}$, we do not need to include $H_t$ in $X_t$. Using the stochastic process $\{X_t\}_{t=1}^{T_\mathrm{SPUR}}$, we next rewrite the event of rejecting at least one true hypotheses, i.e., the event of occuring a Type-I error.

\begin{restatable}{lemma}{lem:spur-bound} \label{lem:spur-bound}
  Consider a SPUR process $\{X_t\}_{t=1}^{T_\mathrm{SPUR}}$, let $E_t = \{h_t^r \in T, p_t^r \leq \sigma_t\}$  and $T_E = \min \{t \in [|H|]: t \leq T_\mathrm{SPUR} \land E_t\}$, the following holds:
  \begin{align*}
    \{R \cap T \neq \emptyset\} = \{T_E \leq |H|\}.
  \end{align*}  
\end{restatable}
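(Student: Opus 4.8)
The plan is to reduce both sides of the claimed event identity to a single combinatorial description of the rejection set $R$ in terms of the per-iteration variables of the SPUR process, and then read off the equivalence. First I would record the structural fact that, by Definition~\ref{def:spur-process} together with the stopping rule $T_\mathrm{SPUR} = \min\{t : p_t^r > \sigma_t\}$, a rejection occurs at iteration $t$ if and only if $t < T_\mathrm{SPUR}$, and in that case the hypothesis appended to the rejection set is exactly $h_t^r$. Hence $R = \{h_t^r : 1 \le t < T_\mathrm{SPUR}\}$. I would also observe that $|H_t|$ strictly decreases at each rejection, since $h_t^r \notin H_{t+1} = \{h \in H_t : h \succ h_t^r\}$ by irreflexivity of $\succ$; this guarantees that every rejection index lies in $[|H|]$, which is what legitimizes restricting $t$ to $[|H|]$ in the definition of $T_E$.

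The heart of the argument is the pointwise equivalence, valid for every $t \le T_\mathrm{SPUR}$,
\begin{align*}
  E_t \iff \bigl(h_t^r \in T \ \text{ and } \ t < T_\mathrm{SPUR}\bigr).
\end{align*}
The nontrivial half uses the stopping rule in two ways: for $t < T_\mathrm{SPUR}$ the inequality $p_t^r \le \sigma_t$ holds automatically, because otherwise $T_\mathrm{SPUR}$ would be at most $t$, so $E_t$ collapses to the condition $h_t^r \in T$; and at $t = T_\mathrm{SPUR}$ we have $p_t^r > \sigma_t$, so $E_{T_\mathrm{SPUR}}$ is necessarily false. Thus an iteration $t \le T_\mathrm{SPUR}$ satisfies $E_t$ precisely when it is a genuine rejection iteration whose rejected hypothesis $h_t^r$ is a true null.

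With this equivalence the two inclusions are short. For $\{R \cap T \neq \emptyset\} \subseteq \{T_E \le |H|\}$: if a true hypothesis lies in $R$, then by the description of $R$ it equals $h_t^r$ for some rejection index $t$ with $t < T_\mathrm{SPUR}$ and $t \le |H|$, so the equivalence gives $E_t$, the set defining $T_E$ is nonempty, and $T_E \le t \le |H|$. For the reverse inclusion: if $T_E \le |H|$, then $E_{T_E}$ holds with $T_E \le T_\mathrm{SPUR}$; since $E_{T_\mathrm{SPUR}}$ is impossible this forces $T_E < T_\mathrm{SPUR}$ and $h_{T_E}^r \in T$, and because $T_E < T_\mathrm{SPUR}$ a rejection occurs at that iteration, so $h_{T_E}^r \in R$ and $R \cap T \neq \emptyset$.

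The step I expect to require the most care is the bookkeeping around the stopping time, namely making precise that $p_t^r \le \sigma_t$ is implied for all $t$ strictly below $T_\mathrm{SPUR}$ while being violated exactly at $T_\mathrm{SPUR}$, and confirming (against the $R_{t-1}$ indexing returned in Algorithm~\ref{al:SPUR}) that the final set $R$ is indeed the full collection $\{h_t^r : t < T_\mathrm{SPUR}\}$ with no off-by-one discrepancy. Once the stopping-time semantics are pinned down, everything else is a direct unwinding of the definitions.
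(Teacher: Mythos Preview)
Your argument is correct. Both your proof and the paper's rest on the same two structural facts---that $R=\{h_t^r:t<T_\mathrm{SPUR}\}$ and that $p_t^r\le\sigma_t$ holds exactly for $t<T_\mathrm{SPUR}$---but the organization differs. The paper proceeds set-algebraically: it first simplifies $\{T_E=t_0\}$ down to $\{\forall_{t<t_0},\,h_t^r\in F\wedge p_t^r\le\sigma_t\}\cap\{h_{t_0}^r\in T\wedge p_{t_0}^r\le\sigma_{t_0}\}$, then separately expands $\{R\cap T\neq\emptyset\}$ as a union over the stopping time $T_\mathrm{SPUR}$ and the first index with $h_t^r\in T$, and finally matches the two expressions term by term. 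Your route is more economical: you extract one pointwise equivalence, $E_t\iff(h_t^r\in T\text{ and }t<T_\mathrm{SPUR})$ for $t\le T_\mathrm{SPUR}$, and then the two inclusions are each two lines. What the paper's decomposition buys is an explicit description of $\{T_E=t_0\}$ that it reuses verbatim in the next lemma (Lemma~\ref{lem:alternative-bound}); your packaging is tidier for this lemma in isolation but would need that description recovered later. Your caution about the off-by-one in Algorithm~\ref{al:SPUR} (where the returned set is $R_{t-1}$) is well placed and correctly handled.
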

\begin{longver}
  \begin{proof}
    \begin{align}
      &\{T_E = t_0\} \\
      = & \{t_0 \leq T_{SPUR}\} \cap (\cap_{t=1}^{t_0 - 1} \bar E_t) \cap E_{t_0} \\
      = & \{\forall_{t \leq t_0}, p_t^r \leq \sigma_t\} \cap (\cap_{t=1}^{t_0 - 1} \bar E_t) \cap E_{t_0} \\
      = & \{\forall_{t \leq t_0}, p_t^r \leq \sigma_t\} \cap \{\forall_{t < t_0}, h_t^r \in F \lor p_t^r > \sigma_t\} \\
      &\cap \{h_{t_0}^r \in T \land p_{t_0}^r \leq \sigma_{t_0}\}
    \end{align}

    Since, 
    \begin{align}
      &\{\forall_{t \leq t_0}, p_t^r \leq \sigma_t\} \\
      \implies \quad & \{\forall_{t < t_0}, h_t^r \in F \lor p_t^r > \sigma_t\} = \{\forall_{t < t_0}, h_t^r \in F\}    
    \end{align}

    Hence,
    \begin{align}      
      &\{T_E = t_0\} \\
      = & \{\forall_{t < t_0}, h_t^r \in F \land p_t^r \leq \sigma_t\} \cap \{h_{t_0}^r \in T \land p_{t_0}^r \leq \sigma_{t_0}\}
    \end{align}

    In the other hand, we consider the event of rejecting at least a true hypothesis
    \begin{align}
      &\{R \cap T \neq \emptyset\} \\
      &= \cup_{t_0=1}^{|H|}(\{T_\mathrm{SPUR} = t_0\} \cap \{\exists_{t \leq t_0}, h_t^r \in T\}) \\
      &= \cup_{t_0=1}^{|H|}(\{T_\mathrm{SPUR} = t_0\} \cap (\cup_{t=1}^{t_0} \{\forall_{i < t_0} h_i^r \in F, h_t^r \in T\}) \\
      &= \cup_{t_0=1}^{|H|}(\cup_{t=1}^{t_0} \{T_\mathrm{SPUR} = t_0\}) \cap \{\forall_{i < t} h_i^r \in F, h_t^r \in T\} \\
      &= \cup_{t_0=1}^{|H|} \{t_0 \leq T_\mathrm{SPUR}\} \cap \{\forall_{i < t_0} h_i^r \in F, h_{t_0}^r \in T\} \\
      &= \cup_{t_0=1}^{|H|} \{\forall_{i \leq t_0}, p_i^r \leq \sigma_t \} \cap \{\forall_{i < t_0}, h_i^r \in F, h_{t_0}^r \in T\} \\  
      &= \cup_{t_0=1}^{|H|} \{\forall_{t < t_0}, h_t^r \in F \land p_t^r \leq \sigma_t\} \cap \{h_{t_0}^r \in T \land p_{t_0}^r \leq \sigma_{t_0}\} \\
      &= \cup_{t_0=1}^{|H|} \{T_E = t_0\} = \{T_E \leq |H|\}
    \end{align}

    This concludes our proof.
  \end{proof}
\end{longver}
\begin{figure}[t]
  \centering
  \tikzstyle{state}=[circle,
                    thick,
                    minimum size=0.8cm,
                    draw=blue!80,
                    fill=blue!20]

  \tikzstyle{measurement}=[circle,
                          thick,
                          minimum size=0.8cm,
                          draw=orange!80,
                          fill=orange!25]

  \tikzstyle{input}=[rectangle,
                    thick,
                    minimum size=0.6cm,
                    draw=purple!80,
                    fill=purple!20,
                    rounded corners=1mm]

  \tikzstyle{background}=[rectangle,
  fill=gray!10,
  inner sep=0.2cm,
  rounded corners=5mm]
  
  \begin{adjustbox}{max size={.5\linewidth}}  
    \begin{tikzpicture}[>=latex,text height=1.3ex,text depth=0.25ex]
      
      \matrix[row sep=0.25cm, column sep=0.25cm] {
        & \node (p_F) [input]{$\{p_h\}_{h \in T}$}; &
        & \node (p_T) [input]{$\{p_h\}_{h \in F}$}; &
        \\
        \node (x_1) [state] {$X_1$}; &
        \node (x_dot_1) {$\cdots$}; &
        \node (x_t) [state] {$X_t$};  &
        \node (x_dot_2) {$\cdots$}; &
        \node (x_T) [state] {$X_{T_k}$}; &
        \\

        \node (e_1) [measurement] {$E_1$}; &
        \node (e_dot_1) {$\cdots$}; &    
        \node (e_t) [measurement] {$E_t$}; &
        \node (e_dot_1) {$\cdots$}; &    
        \node (e_T) [measurement] {$E_{T_k}$}; &
        \\
        };
        
        \path[->]
            (p_T) edge (x_1)
            (p_T) edge (x_t)
            (p_T) edge (x_T)        

            (p_F) edge (x_1)
            (p_F) edge (x_t)
            (p_F) edge (x_T)                

            (x_1) edge (e_1)
            (x_t) edge (e_t)
            (x_T) edge (e_T)

            (x_1) edge (x_dot_1)
            (x_dot_1) edge (x_t)
            (x_t) edge (x_dot_2)        
            (x_dot_2) edge (x_T)
        ;  
      \begin{pgfonlayer}{background}
        \node [background,
                    fit=(x_1) (x_T) (p_T) (p_F)
                    ] {};
      \end{pgfonlayer}          
    \end{tikzpicture}
  \end{adjustbox}

  \caption{Relation graph of $\{X_t\}_{t=1}^{T_\mathrm{SPUR}}$ (shaded) and event $E_t$ (we let $k=T_\mathrm{False}$ in the graph)}
  \label{fig:graph-spur}
  \vspace{-0.5em}  
\end{figure}
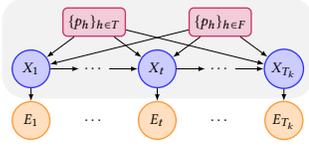

Here, $E_t = \{h_t^r \in T, p_t^r \leq \sigma_t\}$ is the event of rejecting a true hypothesis at the $t$-th iteration and $T_E = \min \{t \in [|H|]: t \leq T_\mathrm{SPUR} \land E_t\}$ is the first iteration that a true hypotheses got rejected. Lemma \ref{lem:spur-bound} claims that the FWER is actually the probability of the first Type-I error occurs at some step $t_0 \leq |H|$. To show the FWER controlling, we have to consider the relationship between the p-values of the true hypotheses, the false hypotheses, and the rejection threshold at each iteration. However, such the dependence are complicated in the SPUR process $\{X_t\}_{t=1}^{T_\mathrm{SPUR}}$. Thus, we instead consider an alternative stochastic process that only depends on the false hypotheses and show the FWER controlling by analyzing this process. In addition, we illustrate the relation of entities in the SPUR process using a graph shown in Figure \ref{fig:graph-spur}.

\begin{restatable}[False hypotheses based process]{definition}{deffalseprocess} \label{def:false-process}
  \begin{longver}
    Let $p_0^* = 0, H_1^* = H, R_0^* = \emptyset, \delta_1^* = \alpha$, a stochastic process $\{Y_t\}_{t=1}^{T_\mathrm{FALSE}}$ with $Y_t = (\delta_t^*, R_t^*, H_t^*, p_t^{r*}, h_t^{r*}, \sigma_t^*, \tau_t^*)$ is said to be a false-hypotheses based SPUR process with stopping time $T_\mathrm{FALSE}$ if:
    \begin{align}
      \text{for } t > 1: \quad    
      &H_{t}^*     = \{h \in H_{t-1}^* \cond h > {h_{t-1}^{r*}}\} \\
      &\delta_{t}^*  = \delta_{t-1}^* - \tau_{t-1}(p_{t-1}^{r*} - p_{t-1}^{r*}) + p_{t-1}^{r*} \\
      \text{for } t \geq 1: \quad 
      & p_t^{r*}   = \min_{h \in H_t^* \cap F} p_h \cm h_t^{r*}   = \text{argmin}_{h \in H_t^* \cap F} p_h \\
      & \sigma_t^*= \max\{\sigma : (\sigma - p_{t-1}^*)|\kappa_t^*(\sigma)| \leq \delta_t^*\} \\
      & \tau_t^*  = \delta_t^* / (\sigma_t^* - p_{t-1}^*) \\
      & R_{t}^*   = 
      \begin{cases}
        R_{t-1}^* \cap \{h_{t-1}^*\} \quad & (p_t^{r*} \leq \sigma_t^*) \\
        R_{t-1}^* \quad & (otherwise)
      \end{cases}    
    \end{align}
      and $T_\mathrm{FALSE} = \min\{t \in [H]: p_t^* > \sigma_t^*\}$
  \end{longver}

  \begin{shortver}
    A false hypotheses based process is a stochastic process $\{Y_t\}_{t=1}^{T_\mathrm{FALSE}}$ stopped at stopping time $T_\mathrm{False}$ where $Y_t = (\delta_t^*, R_t^*, p_t^{r*}, h_t^{r*}, \sigma_t^*, \tau_t^*)$ are obtained from the algorithm SPUR with the following modification:
    \begin{align*}
      p_t^{r*} = \min_{h \in H_t^* \cap F} p_h \text{ and } h_t^{r*} = \text{argmin}_{h \in H_t^* \cap F} p_h.
    \end{align*}
  \end{shortver}
\end{restatable}

Next, we define a sequence of true hypotheses' p-values $\{p_t^{T*}\}_{t=1}^{T_\mathrm{False}}$ where each $p_t^{T*}$ is obtained using the value $Y_t$ of the false hypotheses based process.
\begin{restatable}[Alternative true hypotheses sequence]{definition}{alternative} \label{def:alternative-true}
  A set of r.v. $\{p_t^{T*}\}_{t=1}^{T_\mathrm{False}}$ is said to be an alternative true hypotheses sequence obtained from the false hypotheses based process $\{Y_t\}_{t=1}^{T_\mathrm{FALSE}}$ if for $t \leq T_\mathrm{FALSE}$:
  \begin{align*}    
    p_t^{T*} = f(Y_t, \{p_h\}_{h \in T}) =  \min_{h \in H_t^* \cap T} p_h.
  \end{align*}  
\end{restatable}

Since at each step, SPUR rejects the most significant hypothesis $h_t^r = min_{h \in H_t} p_h$, the event $\{h_t^r \in T\}$ (and $\{h_t^r \in F\}$) depends on the comparison of $min_{h \in H_t \cap F} p_h$ and $min_{h \in H_t \cap R} p_h$. We next consider this comparison via the false hypotheses based process and the alternative true hypotheses sequence, while claims its relation to Lemma \ref{lem:spur-bound}.

\begin{restatable}{lemma}{lem:alternative-bound} \label{lem:alternative-bound}    
  Consider $\{Y_t\}_{t=1}^{T_\mathrm{False}}$ and $\{p_t^{T*}\}_{t=1}^{T_\mathrm{False}}$ as defined in Definition \ref{def:false-process} and \ref{def:alternative-true}. 
  Let $E^*_t = \{p_t^{T*} \leq p_t^{r*}, p_t^{T*} \leq \sigma_t^*\}$ and $T_E^* = \min \{t \in [|H|]: t \leq T_\mathrm{False} \land E^*_t\}$ and $T_E$ as defined in Lemma \ref{lem:spur-bound},
  \begin{align*}
    &T_E \geq T_E^* \text{ almost surely}.
  \end{align*}  
\end{restatable}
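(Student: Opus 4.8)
The plan is to show that, strictly before the original SPUR process first rejects a true hypothesis, it is indistinguishable from the false-hypotheses based process $\{Y_t\}$, and then to verify that the event $E^*_{T_E}$ is automatically triggered at the iteration $T_E$ where that first Type-I error occurs. Reading these two facts together yields $T_E^* \le T_E$ almost surely.

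First I would dispose of the vacuous case: if the original process never rejects a true hypothesis then $T_E$ is a minimum over an empty index set, and the inequality is immediate. So assume $T_E = t_0$ is finite with $t_0 \le T_\mathrm{SPUR}$. By definition of $T_E$, for every $t < t_0$ the process is still rejecting ($p_t^r \le \sigma_t$) and its rejected hypothesis is false ($h_t^r \in F$); equivalently the overall minimiser $\mathrm{argmin}_{h \in H_t} p_h$ lies in $F$, so that $\min_{h \in H_t} p_h = \min_{h \in H_t \cap F} p_h$.

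The heart of the argument is an induction on $t$ establishing that $H_t = H_t^*$, $\delta_t = \delta_t^*$ and $p_{t-1}^r = p_{t-1}^{r*}$, and hence that $\sigma_t = \sigma_t^*$, $\tau_t = \tau_t^*$, $R_t = R_t^*$, $p_t^r = p_t^{r*}$ and $h_t^r = h_t^{r*}$, for all $t < t_0$. The base case follows from the shared initialisation $H_1 = H_1^* = H$, $\delta_1 = \delta_1^* = \alpha$, $p_0^r = p_0^{r*} = 0$. For the step, the quantities $H_t, \delta_t, \sigma_t$ are functions of iteration $t-1$ only (through the candidate update, the budget update at line~\ref{al:modify-delta}, and the threshold rule), so they transfer from the induction hypothesis; the delicate point is the equality $p_t^r = p_t^{r*}$, which holds exactly because the overall minimum over $H_t = H_t^*$ is attained inside $F$ and therefore coincides with the restricted minimum $\min_{h \in H_t^* \cap F} p_h$ used by $\{Y_t\}$. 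This coincidence also fixes the quantities at $t_0$ that are inherited from $t_0 - 1$, giving $H_{t_0} = H_{t_0}^*$, $\delta_{t_0} = \delta_{t_0}^*$ and $\sigma_{t_0} = \sigma_{t_0}^*$.

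Finally I would check $E^*_{t_0}$ directly. Since the original minimiser at $t_0$ is a true hypothesis, $p_{t_0}^{T*} = \min_{h \in H_{t_0}^* \cap T} p_h = p_{t_0}^r$, while $p_{t_0}^{r*} = \min_{h \in H_{t_0} \cap F} p_h \ge p_{t_0}^r = p_{t_0}^{T*}$, which is the first defining inequality of $E^*_{t_0}$; the second, $p_{t_0}^{T*} \le \sigma_{t_0}^*$, reduces to $p_{t_0}^r \le \sigma_{t_0}$, i.e. the rejection clause of $E_{t_0}$. Because $p_t^{r*} = p_t^r \le \sigma_t = \sigma_t^*$ for all $t < t_0$, the false process has not stopped, so $T_\mathrm{False} \ge t_0$ and thus $T_E^* \le t_0 = T_E$. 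The main obstacle is the inductive coincidence step, and in particular justifying $p_t^r = p_t^{r*}$: one must argue that, up to the first Type-I error, minimising over the whole candidate set is the same as minimising over the false candidates, and that this equality propagates undisturbed through the budget and threshold updates so that the two processes stay locked together all the way to $T_E$.
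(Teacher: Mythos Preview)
Your proposal is correct and follows essentially the same route as the paper: establish by induction that the SPUR process and the false-hypotheses based process coincide for all $t < T_E$ (using that the overall minimiser lies in $F$ there), transfer the inherited quantities $H_{t_0},\delta_{t_0},\sigma_{t_0}$ to the starred process, and then verify $E^*_{t_0}$ and $t_0 \le T_\mathrm{False}$ directly at $t_0 = T_E$. Your organisation (starting from $\{T_E = t_0\}$ and checking $E^*_{t_0}$) is if anything cleaner than the paper's, which first unpacks $\{T_E^* = t_0\}$ and then matches the two descriptions.
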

\begin{longver}
  \begin{proof}
    \begin{align}
      &\{T_E^* = t_0\}\\
      = &\{t_0 \leq T_\mathrm{False}\} \cap (\cap_{t=1}^{t_0 - 1} \bar E_t^*) \cap E_{t_0}^* \\
      = &\{\forall_{t < t_0}, p_t^* \leq \sigma_t^*\} \cap \{\forall_{t < t_0}, p_t^{T*} > p_t^{r*} \lor p_t^{T*} > \sigma_t^*\}\\ 
      &\cap \{p_{t_0}^{T*} \leq p_{t_0}^{r*} \land p_{t_0}^{T*} \leq \sigma_{t_0^*}\} \\
      = &\{\forall_{t < t_0}, p_t^{T*} > p_t^{r*} \land p_t^{r*} \leq \sigma_t^*\} \cap \{p_{t_0}^{T*} \leq p_{t_0}^{r*} \land p_{t_0}^{T*} \leq \sigma_{t_0^*}\}
    \end{align}

    Next, we show that 
    \begin{align}
      \forall_{t < t_0}, p_t^{T*} > p_t^{r*} \implies \forall_{t < t_0}, X_t = Y_t \label{shared-prefix}
    \end{align}
    
    First, we have
    \begin{align}
      &p_t^{T*} > p_t^{r*}  \\
      \iff \quad &\min_{h \in H_t^* \cap T} > \min_{h \in H_t^* \cap F}\\
      \iff \quad &\text{argmin}_{h \in H_t^* \cap F} = \text{argmin}_{h \in H_t^*}
    \end{align}
    
    In the other hand,
    \begin{align}
      h_t^r \in F &\iff \min_{h \in H_t} = \min_{h \in H_t \cap F}\\
      h_t^r \in T &\iff \min_{h \in H_t} = \min_{h \in H_t \cap T}
    \end{align}

    Hence,
    \begin{align}
      &p_t^{T*} > p_t^{r*} \land H_t = H_t^* \\
      \implies \quad &\text{argmin}_{h \in H_t^* \cap F} = \text{argmin}_{h \in H_t} \\
      \implies \quad &h_t^{r*} = h_1^r \\
      \implies \quad &X_t = Y_t
    \end{align}

    Moreover, we have
    \begin{align}
      &p_t^{T*} > p_t^{r*} \land (H_t, \delta_t, \sigma_t, \tau_t) = (H_t^*, \delta_t^*, \sigma_t^*, \tau_t^*) \\
      \implies \quad & X_t = Y_t \\
      \implies \quad & (\delta_{t+1}, H_{t+1}, \sigma_{t+1}, \tau_{t+1}) = (\delta_{t+1}^*, H_{t+1}^*, \sigma_{t+1}^*, \tau_{t+1}^*)
    \end{align}

    The last line follows from the definition of two procedure. Moreover, since $(H_1, \delta_1, \sigma_1, \tau_1) = (H_1^*, \delta_1^*, \sigma_1^*, \tau_1^*)$, 
    \begin{align}
      p_1^{T*} > p_1^{r*} \implies X_t = Y_t
    \end{align}

    The claim \eref{shared-prefix} can now be shown using induction and we rewrite the event $\{T_E = t_0\}$ as
    \begin{align}  
      &\{T_E = t_0\} \\
      = &\{\forall_{t < t_0}, h_t^r \in F \land p_t^r \leq \sigma_t\} \cap \{h_{t_0}^r \in T \land p_{t_0}^r \leq \sigma_{t_0}\} \\
      = &\{\forall_{t < t_0}, h_t^{r*} < h_t^{T*} \land p_t^{r*} \leq \sigma_t^*\} \cap \{h_t^{T*} < h_{t_0}^{r*} \in T \land p_{t_0}^{r*} \leq \sigma_{t_0}^*\} \\
      \subseteq &\{\forall_{t < t_0}, h_t^{r*} < h_t^{T*} \land p_t^{r*} \leq \sigma_t^*\} \cap \{h_t^{T*} \leq h_{t_0}^{r*} \in T \land p_{t_0}^{r*} \leq \sigma_{t_0}^*\}
    \end{align}

    Thus, 
    \begin{align}
      \{T_E^* = t_0\} \implies \{T_E = t_0\}
    \end{align}

    This concludes our proof.
  \end{proof}
\end{longver}

\begin{figure}[t]
\centering
\tikzstyle{state}=[circle,
                  thick,
                  minimum size=0.8cm,
                  draw=blue!80,
                  fill=blue!20]

\tikzstyle{measurement}=[circle,
                        thick,
                        minimum size=0.8cm,
                        draw=orange!80,
                        fill=orange!25]

\tikzstyle{input}=[rectangle,
                  thick,
                  minimum size=0.6cm,
                  draw=purple!80,
                  fill=purple!20,
                  rounded corners=1mm]

\tikzstyle{background}=[rectangle,
                        fill=gray!10,
                        inner sep=0.2cm,
                        rounded corners=5mm]

\begin{adjustbox}{max size={.66\linewidth}}
  \begin{tikzpicture}[>=latex,text height=1.3ex,text depth=0.25ex]

  \matrix[row sep=0.25cm,column sep=0.25cm] {
    &&& \node (p_F) [input]{$\{p_h\}_{h \in F}$}; &
    \\
    \node (y_1) [state] {$Y_1$}; &&
    \node (y_dot_1) {$\cdots$}; &
    \node (y_t) [state] {$Y_t$};  &&
    \node (y_dot_2) {$\cdots$}; &
    \node (y_T) [state] {$Y_{T_k}$}; &
    \\

    \node (pt_1) [state] {$p_1^{T*}$}; &
    \node (e_1) [measurement] {$E_1^*$}; &

    \node (pt_dot_1) {$\cdots$}; &

    \node (pt_t) [state] {$p_1^{T*}$};  &
    \node (e_t) [measurement] {$E_t^*$}; &

    \node (pt_dot_2) {$\cdots$}; &

    \node (pt_T) [state] {$p_{k}^{T*}$}; &
    \node (e_T) [measurement] {$E_{T_k}^*$}; &
    \\

    &&& \node (p_T) [input]{$\{p_h\}_{h \in T}$}; &
    \\
    };

    \path[->]
        (p_F) edge (y_1)
        (p_F) edge (y_t)
        (p_F) edge (y_T)

        (p_T) edge (pt_1)
        (p_T) edge (pt_t)
        (p_T) edge (pt_T)

        (y_1) edge (pt_1)
        (y_t) edge (pt_t)
        (y_T) edge (pt_T)

        (y_1) edge (y_dot_1)
        (y_dot_1) edge (y_t)
        (y_t) edge (y_dot_2)
        (y_dot_2) edge (y_T)

        (y_1) edge (e_1)
        (pt_1) edge (e_1)

        (y_t) edge (e_t)
        (pt_t) edge (e_t)

        (y_T) edge (e_T)
        (pt_T) edge (e_T)
    ;

    \begin{pgfonlayer}{background}
      \node [background,
                  fit=(y_1) (y_T) (p_F)
                  ] {};
    \end{pgfonlayer}
\end{tikzpicture}
\end{adjustbox}

\caption{Relation graph of $\{Y_t\}_{t=1}^{T_\mathrm{False}}$ (shaded), $\{p_t^{T*}\}_{t=1}^{T_\mathrm{False}}$ and event $E_t^*$ (we let $k=T_\mathrm{False}$ in the graph)}
\label{fig:graph-false}
\vspace{-1em}
\end{figure}
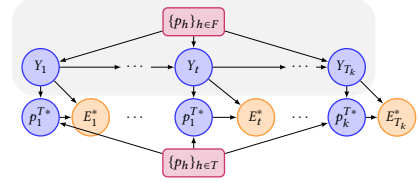

We also give an illustration on the events $E_t^*$ and their relationship with other entities in Figure \ref{fig:graph-false}. We have that $\p{R \cap T \neq \emptyset} \leq \p{T_E \leq |H|} \leq \p{T_E^* \leq |H|}$ from Lemma \ref{lem:alternative-bound}. Moreover, since
\begin{align*}
  \p{T_E^* \leq |H|} = \ev{\{p_h\}_{h \in F}}{\p{T_E^* \leq |H| \cond \{p_h\}_{h \in F}}}.
\end{align*}

we next find the upper bound of $\p{T_E^* \leq |H| \cond \{p_h\}_{h \in F}}$.
\begin{restatable}{lemma}{lem:conditioned-bound} \label{lem:conditioned-bound}  
  Using the same definition of Lemma \ref{lem:alternative-bound} and let $k = T_\mathrm{False}$, under Assumption \ref{as:independence}, the following holds:  
  \begin{align*}
    &\p{T_E^* \leq |H| \cond \{p_h\}_{h \in F}} \\
    \leq &\sum_{t_0 < k} |\kappa_{t_0}^*(p_{t_0}^{r*}) \cap T| (p_{t_0}^{r*} - p_{t_0 - 1}^{r*}) + |\kappa_k^*(p_{k}^{r*}) \cap T| (\sigma_k^* - p_{k - 1}^{r*}).
  \end{align*}  
\end{restatable}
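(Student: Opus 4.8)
The plan is to exploit that, once we condition on the false-hypothesis p-values $\{p_h\}_{h\in F}$, the entire false-hypotheses based process $\{Y_t\}_{t=1}^{T_\mathrm{False}}$ is \emph{deterministic}: the quantities $p_t^{r*}, \sigma_t^*, H_t^*, \kappa_t^*(\cdot)$ and the stopping time $k=T_\mathrm{False}$ all become fixed constants, and the only remaining randomness sits in $\{p_h\}_{h\in T}$ through $p_t^{T*}=\min_{h\in H_t^*\cap T}p_h$. I would first reduce the event $\{T_E^*\le|H|\}$ to a disjoint union over the first crossing time, writing $\{T_E^*\le|H|\}=\bigsqcup_{t_0=1}^{k}\{T_E^*=t_0\}$, so that $\cp{T_E^*\le|H|}{\{p_h\}_{h\in F}}=\sum_{t_0=1}^{k}\cp{T_E^*=t_0}{\{p_h\}_{h\in F}}$. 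Working with the \emph{first} crossing time (rather than a plain union bound over all $t$) is what keeps the estimate tight enough to telescope into p-value increments without double counting.

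The core combinatorial step is to show, for each $1\le t_0<k$,
\[
\{T_E^*=t_0\}\subseteq\Big\{\exists\,h\in\kappa_{t_0}^*(p_{t_0}^{r*})\cap T:\ p_{t_0-1}^{r*}<p_h\le p_{t_0}^{r*}\Big\},
\]
and the analogous containment at the boundary $t_0=k$ with the upper endpoint replaced by $\sigma_k^*$. Here I would use two monotonicity facts: $p_t^{r*}$ is non-decreasing because $H_t^*\cap F$ shrinks in $t$, and $\sigma_t^*$ is non-decreasing by the false-process analogue of Lemma~\ref{lem:sigmaincrease}. Since the surviving true set $H_t^*\cap T$ also shrinks, $p_t^{T*}$ is non-decreasing; combining $p_{t_0}^{T*}\le p_{t_0}^{r*}$ (from $E^*_{t_0}$, using $p_{t_0}^{r*}\le\sigma_{t_0}^*$ for $t_0<k$) with $p_{t_0-1}^{T*}>p_{t_0-1}^{r*}$ (from $\bar E^*_{t_0-1}$) forces the minimizing true hypothesis into the half-open increment interval. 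Tarone's trick then justifies the restriction from $H_{t_0}^*\cap T$ to the testable subset $\kappa_{t_0}^*(p_{t_0}^{r*})\cap T$, because any $h$ with $\psi(h)>p_{t_0}^{r*}$ satisfies $\p{p_h\le p_{t_0}^{r*}}=0$ and hence cannot contribute. At the boundary $t_0=k$ the true endpoint is $\sigma_k^*$, and since $\sigma_k^*<p_k^{r*}$ the testable count may be replaced by the (valid, looser) $|\kappa_k^*(p_k^{r*})\cap T|$ appearing in the statement.

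Finally, I would turn the containment into the numerical bound. Assumption~\ref{as:independence} is invoked precisely here: the true p-values are independent of $\{p_h\}_{h\in F}$, so conditioning on the false p-values does not alter the null law of each $p_h$, and a union bound over the at most $|\kappa_{t_0}^*(p_{t_0}^{r*})\cap T|$ testable true hypotheses gives
\[
\cp{T_E^*=t_0}{\{p_h\}_{h\in F}}\le |\kappa_{t_0}^*(p_{t_0}^{r*})\cap T|\,(p_{t_0}^{r*}-p_{t_0-1}^{r*}),
\]
with the boundary term at $t_0=k$ carrying the factor $(\sigma_k^*-p_{k-1}^{r*})$; summing over $t_0$ yields the claim. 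The main obstacle is this last per-hypothesis estimate together with the boundary case: bounding the single-hypothesis increment probability $\p{p_{t_0-1}^{r*}<p_h\le p_{t_0}^{r*}}$ by the width $p_{t_0}^{r*}-p_{t_0-1}^{r*}$ requires care about the discreteness of the Fisher p-value and about the threshold switching from $p_k^{r*}$ to $\sigma_k^*$ at the stopping time; aligning the disjoint first-crossing decomposition with the successive increments $\{(p_{t_0-1}^{r*},p_{t_0}^{r*}]\}_{t_0}$ so that no probability mass is counted twice is the delicate part of the argument.
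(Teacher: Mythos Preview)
Your proposal is correct and follows essentially the same route as the paper: condition on $\{p_h\}_{h\in F}$ so that the $Y$-process is deterministic, decompose $\{T_E^*\le|H|\}$ by the first-crossing time $t_0\le k$, use the monotonicity of $p_t^{T*}$ together with $p_{t_0}^{r*}\le\sigma_{t_0}^*$ for $t_0<k$ to trap the minimizing true p-value in $(p_{t_0-1}^{r*},p_{t_0}^{r*}]$ (with endpoint $\sigma_k^*$ at $t_0=k$), restrict to the testable set via Tarone, and finish with a union bound under Assumption~\ref{as:independence}. Two minor remarks: the monotonicity of $\sigma_t^*$ you invoke is not actually needed for this lemma (only $p_{t_0}^{r*}\le\sigma_{t_0}^*$, which comes straight from the definition of $T_{\mathrm{False}}$), and the paper dispatches your discreteness concern by simply asserting that null p-values are uniform.
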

\begin{longver}
  \begin{proof}
    First, since the False-hypotheses based process $\{Y_t\}_{t=1}^{T_\mathrm{False}}$  determined only by only the values of $\{p_h\}_{h \in F}$, 
    \begin{align}
      &\p{T_E^* \leq |H| \cond \{p_h\}_{h \in F}} \\
      = &\p{T_E^* \leq |H| \cond \{p_h\}_{h \in F}, \{Y_t\}_{t=1}^{T_\mathrm{False}}} \\
      = &\p{T_E^* \leq |H| \cond p_F, Y_F, k}
    \end{align} 

    In the last line, we let $k = T_\mathrm{False}$, and compactly express the conditional term for the sake of space, where 
    $p_F$ and $Y_F$ represent $\{p_h\}_{h \in F}$ and $\{Y_t\}_{t=1}^{T_\mathrm{False}}$, respectively.
  
    From the definition of $T_E^*$, we have:
    \begin{align}
      &T_E^* = \min \{t \in [|H|]: t \leq T_\mathrm{False} \land E^*_t\} \\
      \implies \quad &T_E^* \leq T_\mathrm{False} \\
      \implies \quad &\p{T_E^* \leq |H| \cond p_F, Y_F, k} = \p{T_E^* \leq k \cond p_F, Y_F, k}
    \end{align}

    Beside, using the result of $T_E^*$ obtained before:
    \begin{align}
      &\{T_E^* \leq k\} \\
      = &\cup_{t_0 \leq k} \{\forall_{t < t_0}, p_t^{T*} > p_t^{r*} \land p_t^{r*} \leq \sigma_t^* \land p_{t_0}^{T*} \leq p_{t_0}^{r*} \land p_{t_0}^{T*} \leq \sigma_{t_0^*}\} \\
      \subseteq &\cup_{t_0 \leq k} \{p_{t_0 - 1}^{T*} > p_{t_0 - 1}^{r*} \land p_{t_0}^{T*} \leq p_{t_0}^{r*} \land p_{t_0}^{T*} \leq \sigma_{t_0^*}\} \\
      \subseteq &\cup_{t_0 \leq k} \{p_{t_0}^{T*} > p_{t_0 - 1}^{r*} \land p_{t_0}^{T*} \leq p_{t_0}^{r*} \land p_{t_0}^{T*} \leq \sigma_{t_0^*}\} 
    \end{align}
    
    The last line follows since $p_{t_0}^{T*} \geq p_{t_0-1}^{T*}$. We next consider the term $\{p_{t_0}^{T*} \leq p_{t_0}^{r*} \land p_{t_0}^{T*} \leq \sigma_{t_0}^*\}$ for different iteration $t_0$ regarding of the stopping time $k = T_\mathrm{FALSE}$. Using the definition of $T_\mathrm{FALSE}$:
    \begin{align}
      k = \min\{t \in [H]: p_t^* > \sigma_t^*\}  
    \end{align}

    Hence,
    \begin{align}
      t_0 < k &\Leftrightarrow p_{t_0}^{T*} \leq \sigma_{t_0}^* \\
              &\Rightarrow \{p_{t_0}^{T*} \leq p_{t_0}^{r*} \land p_{t_0}^{T*} \leq \sigma_{t_0}^*\} = \{p_{t_0}^{T*} \leq p_{t_0}^{r*}\} \\
      t_0 = k &\Leftrightarrow p_{t_0}^{T*} > \sigma_{t_0}^* \\
              &\Rightarrow \{p_{t_0}^{T*} \leq p_{t_0}^{r*} \land p_{t_0}^{T*} \leq \sigma_{t_0}^*\} = \{p_{t_0}^{T*} \leq \sigma_{t_0}^*\}
    \end{align}
    
    And then,
    \begin{align}
      &\{T_E^* \leq k\} \\
      \subseteq &\cup_{t_0 < k} \{p_{t_0}^{T*} > p_{t_0 - 1}^{r*} \land p_{t_0}^{T*} \leq p_{t_0}^{r*} \land p_{t_0}^{T*} \leq \sigma_{t_0}^*\} \\
      &\cup \{p_{k}^{T*} > p_{k - 1}^{r*} \land p_{k}^{T*} \leq p_{k}^{r*} \land p_{k}^{T*} \leq \sigma_{k^*}\} \\
      \subseteq &\cup_{t_0 < k} \{p_{t_0 - 1}^{r*}  < p_{t_0}^{T*} \leq p_{t_0}^{r*}\} \cup \{p_{k - 1}^{r*} < p_{k}^{T*} \leq \sigma_{k}^*\}
    \end{align}

    The probability of this event is 
    \begin{align}
      &\p{T_E^* \leq k \cond p_F, Y_F, k} \\
      \leq &\sum_{t_0 < k} \p{p_{t_0 - 1}^{r*}  < p_{t_0}^{T*} \leq p_{t_0}^{r*} \cond p_F, Y_F, k} \\ &+ \p{p_{k - 1}^{r*} < p_{k}^{T*} \leq \sigma_{k}^* \cond p_F, Y_F, k} \\
      \leq &\sum_{t_0 < k} \cup_{h \in \kappa_{t_0}^*(p_{t_0}^{r*}) \cap T} \p{p_{t_0 - 1}^{r*}  < p_h \leq p_{t_0}^{r*} \cond p_F, Y_F, k} \\ &+ \cup_{h \in \kappa_{k}^*(\sigma_k^*) \cap T} \p{p_{k - 1}^{r*} < p_h \leq \sigma_{k}^* \cond p_F, Y_F, k}
    \end{align}
    
    Moreover, since $\{Y_t\}_{t=1}^{k}$ only relies on $\{p_h\}_{h \in F}$, by the Assumption \ref{as:independence}:
    \begin{align}
      &\{p_h\}_{h \in F}, \{Y_t\}_{t=1}^{k} \bot \{p_h\}_{h \in F}
    \end{align}
    
    Then, since $p_h, h \in T$ follows the Uniform distribution, 
    \begin{align}
      &\p{T_E^* \leq k \cond p_F, Y_F, k} \\
      \leq &\sum_{t_0 < k} |\kappa_{t_0}^*(p_{t_0}^{r*}) \cap T| (p_{t_0}^{r*} - p_{t_0 - 1}^{r*}) + |\kappa_k^*(\sigma_k^*) \cap T| (\sigma_k^* - p_{k - 1}^{r*})
    \end{align}

    This concludes our proof.
  \end{proof}
\end{longver}

Actually, the proposed algorithm SPUR is designed to guarantee that the right side in the inequation of Lemma \ref{lem:conditioned-bound} always less than $\alpha$, as stated in Lemma \ref{lem:algoritm-bound}.
\begin{restatable}{lemma}{lem:algorithm-bound} \label{lem:algoritm-bound}
  Using the same definition of Lemma \ref{lem:alternative-bound} and let $k = T_\mathrm{False}$, the following holds:    
  \begin{align*}
    \sum_{t_0 < k} |\kappa_{t_0}^*(p_{t_0}^{r*}) \cap T| (p_{t_0}^{r*} - p_{t_0 - 1}^{r*}) + |\kappa_k^*(\sigma_t^*) \cap T| (\sigma_k^* - p_{k - 1}^{r*}) \leq \alpha.
  \end{align*}  
\end{restatable}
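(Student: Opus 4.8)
The statement is a purely deterministic inequality about the false-hypotheses based process $\{Y_t\}_{t=1}^{T_\mathrm{False}}$, so I would fix an arbitrary realization of $\{p_h\}_{h \in F}$ and argue algebraically; Assumption~\ref{as:independence} plays no role here, having already been spent in Lemma~\ref{lem:conditioned-bound}. The plan is to bound the left-hand side through the defining maximality of $\sigma_t^*$ together with the budget recurrence, and then to recover the factor $\alpha$ by cancelling the ``added-back'' terms $p_t^{r*}$ against the measure contributed by the testable \emph{false} hypotheses, writing $k = T_\mathrm{False}$ throughout.

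First I would record the per-step consequence of $\sigma_t^* = \max\{\sigma : (\sigma - p_{t-1}^{r*})|\kappa_t^*(\sigma)| \le \delta_t^*\}$: for every $\sigma \le \sigma_t^*$ we have $(\sigma - p_{t-1}^{r*})|\kappa_t^*(\sigma)| \le \delta_t^*$, so since $p_t^{r*} \le \sigma_t^*$ at each rejection step this yields both the tight bound $|\kappa_t^*(p_t^{r*})|\,(p_t^{r*} - p_{t-1}^{r*}) \le \delta_t^*$ and $|\kappa_t^*(\sigma_t^*)| \le \tau_t^*$. I would then split each testable-true count as $|\kappa_{t_0}^*(\cdot) \cap T| = |\kappa_{t_0}^*(\cdot)| - |\kappa_{t_0}^*(\cdot) \cap F|$ and handle the full-count part by telescoping. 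Rewriting the update as $\tau_t^*(p_t^{r*} - p_{t-1}^{r*}) = \delta_t^* - \delta_{t+1}^* + p_t^{r*}$ and using $\tau_k^*(\sigma_k^* - p_{k-1}^{r*}) = \delta_k^*$, the full-count contributions telescope to $\alpha + \sum_{t<k} p_t^{r*}$, reduced by the savings coming from hypotheses that only become testable in the gap $(p_t^{r*}, \sigma_t^*]$.

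The decisive step is to show that the two correction sources — the testable-false measure $\sum_{t_0}|\kappa_{t_0}^*(\theta_{t_0}) \cap F|\,(\theta_{t_0} - \theta_{t_0-1})$, with $\theta_{t_0} = p_{t_0}^{r*}$ for $t_0 < k$ and $\theta_k = \sigma_k^*$, together with the testability-gap savings — dominate $\sum_{t<k} p_t^{r*}$, so that the left-hand side collapses to at most $\alpha$. Here I would use that the rejected hypothesis $h_{t_0}^{r*} \in F$ is itself testable at its own step (since $\psi(h_{t_0}^{r*}) \le p_{h_{t_0}^{r*}} = p_{t_0}^{r*}$), that the as-yet-unrejected false hypotheses $h_{t_0}^{r*}, \dots, h_k^{r*}$ all survive in $H_{t_0}^*$, and that both the testable sets and the thresholds $\sigma_t^*$ are monotone in $t$ (Lemma~\ref{lem:sigmaincrease}), so a false hypothesis that is not yet testable at one step is instead charged to a later gap term.

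I expect the main obstacle to be exactly this cancellation under Tarone's trick. The naive simplification $|\kappa_t^*(p_t^{r*}) \cap T| \le \tau_t^* - 1$, which subtracts only the single rejected false hypothesis, is too lossy: the resulting telescoping leaves a residual $\sum_{t<k} p_t^{r*} > 0$, and one can exhibit runs in which the testable-false measure alone is strictly smaller than $\sum_{t<k} p_t^{r*}$. The slack that rescues the bound lives in the difference $|\kappa_t^*(\sigma_t^*)| - |\kappa_t^*(p_t^{r*})|$, i.e.\ in hypotheses that become testable between $p_t^{r*}$ and $\sigma_t^*$; the careful part is to retain the \emph{tight} count $|\kappa_t^*(p_t^{r*})|$ at the actual p-value threshold throughout and to match each added-back $p_t^{r*}$ either to a testable-false interval or to such a testability gap. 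In the fully saturated regime every inequality above holds with equality and the bound is attained at $\alpha$, which gives a convenient check on the bookkeeping.
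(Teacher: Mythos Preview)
Your plan diverges from the paper's proof precisely at the step you flag as ``the careful part.'' You insist on retaining the tight count $|\kappa_t^*(p_t^{r*})|$ and then building a charging argument for false hypotheses that are not yet testable; the paper does the opposite. Its first move is to \emph{relax} $\kappa_{t_0}^*(p_{t_0}^{r*})\cap T \subseteq \kappa_{t_0}^*(\sigma_{t_0}^*)\cap T$, which looks wasteful but is what makes everything afterward clean. At the looser threshold $\sigma_{t_0}^*$ the paper uses $|\kappa_{t_0}^*(\sigma_{t_0}^*)\cap F|\ge k-t_0$ (all $k-t_0$ not-yet-rejected false hypotheses, not just the current one), giving
\[
|\kappa_{t_0}^*(\sigma_{t_0}^*)\cap T|\ \le\ |\kappa_{t_0}^*(\sigma_{t_0}^*)|-(k-t_0)\ \le\ \tau_{t_0}^*-(k-t_0).
\]
The Abel summation $\sum_{t_0<k}(k-t_0)(p_{t_0}^{r*}-p_{t_0-1}^{r*})=\sum_{t_0<k}p_{t_0}^{r*}$ then cancels the excess $+\sum_{t<k}p_t^{r*}$ exactly, and the remaining $\sum_{t_0<k}\tau_{t_0}^*(p_{t_0}^{r*}-p_{t_0-1}^{r*})+\delta_k^*$ telescopes to $\alpha$ via the budget recurrence $\tau_t^*(p_t^{r*}-p_{t-1}^{r*})-p_t^{r*}=\delta_t^*-\delta_{t+1}^*$. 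No testability-gap bookkeeping enters.

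So the ``naive simplification $\tau_t^*-1$'' you correctly reject is indeed too lossy, but the remedy is not to keep the tight threshold and charge across steps; it is to relax the threshold to $\sigma_{t_0}^*$ and subtract $k-t_0$ rather than $1$. Your charging scheme may well be completable, but as written it is a plan rather than an argument, and it is reinventing in a harder form what the paper's single relax-and-count step does in two lines.
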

\begin{longver}
  \begin{proof}
    Let
    \begin{align}
      \Delta_k = \sum_{t_0 < k} |\kappa_{t_0}^*(p_{t_0}^{r*}) \cap T| (p_{t_0}^{r*} - p_{t_0 - 1}^{r*}) + |\kappa_k^*(\sigma_k^*) \cap T| (\sigma_k^* - p_{k - 1}^{r*})  
    \end{align}
    
    By our definition of the False-hypotheses based process, we have
    \begin{align}
      &\forall_{t_0 < k}, p_{t_0}^{r*} \leq \sigma_{t_0}^* \\
      \implies \quad &\forall_{t_0 < k}, \kappa_{t_0}^*(p_{t_0}^{r*}) \subseteq \kappa_{t_0}^*(\sigma_{t_0}^*)
    \end{align}
    
    Thus, 
    \begin{align}
      \Delta_k \leq \sum_{t_0 < k} |\kappa_{t_0}^*(\sigma_{t_0}^*) \cap T| (p_{t_0}^{r*} - p_{t_0 - 1}^{r*}) + |\kappa_k^*(\sigma_k^*) \cap T| (\sigma_k^* - p_{k - 1}^{r*})
    \end{align}  
    
    Moreover, 
    \begin{align}
      T_\mathrm{False} = k \implies \forall_{t_0 < k}, |\kappa_{t_0}^*(\sigma_{t_0}^*) \cap F| \geq k - t_0
    \end{align}
    
    And since,
    \begin{align}
      &\kappa_{t_0}^*(\sigma_{t_0}^*) = (\kappa_{t_0}^*(\sigma_{t_0}^*) \cap F) \cup (\kappa_{t_0}^*(\sigma_{t_0}^*) \cap T) \\
      \text{and }& F \cap T = \emptyset
    \end{align}

    The following holds
    \begin{align}
      \forall_{t_0 < k}, |\kappa_{t_0}^*(\sigma_{t_0}^*) \cap T| \leq |\kappa_{t_0}^*(\sigma_{t_0}^*)| - k + t
    \end{align}

    Hence,
    \begin{align}
      \Delta_k \leq &\sum_{t_0<k} (|\kappa_t^*(\sigma_{t_0}^*)| - k + t_0)(p_{t_0}^{r^*} - p_{{t_0}-1}^{r^*}) + |\kappa_k^*(\sigma_k^*)|(\sigma_k^* - p_{k-1}^{r^*}) \\
      \leq &\sum_{t_0 < k} \big[|\kappa_t^*(\sigma_{t_0}^*)|(p_{t_0}^{r^*} - p_{t_0-1}^{r^*}) + p_{t_0}^{r^*}\big] + |\kappa_k^*(\sigma_k^*)|(\sigma_k^* - p_{k-1}^{r^*})
    \end{align}
    
    Besides, since $|\kappa_{t_0}^*(\sigma)|(\sigma - p_{t_0 - 1}^{r^*}) \leq \delta_{t_0}^*$ for ${t_0} \leq k$,
    \begin{align}
      \forall_{t_0 < k}, & |\kappa_k^*(\sigma_{t_0}^*)| \leq \frac{\delta_{t_0}^*}{p_{t_0}^{r*} - p_{t_0 - 1}^{r^*}} = \tau_{t_0}^*  \\
      & |\kappa_k^*(\sigma_k^*)|(\sigma_k^* - p_{k-1}^{r^*}) \leq \delta_k^*
    \end{align}
    
    Thus,
    \begin{align}
      \Delta_k &\leq \sum_{t_0 < k} \big[\tau_t (p_t^{F^*} - p_{t-1}^{F^*}) + p_t^{F^*}\big] + \delta_k^* = \alpha
    \end{align}  
    
    This concludes our proof.
  \end{proof}
\end{longver}

Theorem \ref{theorem:fwer} is then shown by combining the above lemmas.
\mysection{Appendix: Other proofs}\label{app:proofs}
\mysubsection{Proof of Proposition~\ref{pro:spur_limit}}
\begin{proof}
  Since $\forall_{h \in H} \psi(h) = 0$, we have that
  \begin{align*}
 & \forall_t, \forall_{\sigma \geq 0}, \kappa_t(\sigma) = |\{h \in H_t| \psi(h) \leq \sigma\}| = |H_t|.
  \end{align*}

  We also have
  \begin{align*}
    \sigma_t              & = \max\{\sigma : (\sigma - p_{t-1}^r) |H_t| \leq \delta_t\} = \frac{\delta_t}{|H_t|} + p_{t-1}^r               \\
    \tau_t              & = \delta_t / (\sigma_t - p_{t-1}^r) = |H_t|                                                   \\
    \text{Thus, }\sigma_t & = \frac{\delta_t + p_{t-1}^r|H_t|}{|H_t|}                                                    \\
                     & = \frac{\alpha - \sum_{i=1}^{t-1} \left[(p_i^r - p_{i-1}^r)|H_i| - p_i^r\right] + |H_t|p_{t-1}^r}{|H_t|} \\
                     & = \frac{\alpha - \sum_{i=1}^{t-1} p_i^r(|H_i| - |H_{i+1}| - 1)}{|H_t|}.
  \end{align*}

  In the last line, we rewrite the formula by grouping terms for each $p_t^r$ and used the fact that $p_0^r = 0$. This concludes our proof.
\end{proof}

\mysubsection{Proof of Proposition~\ref{pro:usable-case}}
\begin{proof}
  Without the loss of generality, we consider two patterns $s$ and $s'$ where $h_s \in T$ and $h_s' \in F$. We adopt the notions in Section~\ref{ch:background} and assume that the observed dataset is $D = \{t_i, y_i\} _{i=1}^{n_D} \sim \mathcal{D}$. We represent this dataset regarding $s$ and $s'$ as $D = \{x_{i,s}, x_{i,s'}, y_i\}$ where $x_{i,s}=1$ if pattern $s$ is included in transaction $i$.

  Since Fisher's exact test assumes that the margin distribution $n_1, n_D$ and $n_s, n_{s'}$ are fixed, the only remaining r.v.\ in obtaining the p-value are $a_s$ and $a_{s'}$, where $a_s = \sum_{i \in [n_D], x_i^s = 1} y_i$ and $a_{s'} = \sum_{i \in [n_D], x_i^{s'} = 1} y_i$. Letting $I_s = \{i \in [n_D], x_i^s = 1\}$, $I_{s'} = \{i \in [n_D], x_i^{s'} = 1\}$, from the design of $s$ and $s'$, we have $I_s \cap I_{s'} = \emptyset$. Hence, $a_s$ and $a_{s'}$ are obtained using mutually distinguish transactions. We next consider the independence of $(x_i^s, y_i^s)$ and $(x_{i'}^{s'}, y_{i'}^{s'})$ for $i \neq i'$:
  \begin{align*}
    \p{x_{i'}^{s'}, y_{i'}^{s'} \mid x_i^s, y_i^s} & = \p{y_{i'}^{s'} \mid x_{i'}^{s'}, x_i^s, y_i^s} \p{x_{i'}^{s'} \mid x_i^s, y_i^s}                                   \\
                                                & = \frac{\p{y_{i'}^{s'}, x_{i'}^{s'}}}{\p{x_{i'}^{s'} \mid x_i^s, y_i^s}} \frac{\p{x_{i'}^{s'}, x_i^s}}{\p{x_i^s \mid y_i^s}} \\
                                                & = \frac{\p{y_{i'}^{s'}, x_{i'}^{s'}}}{\p{x_{i'}^{s'} \mid x_i^s}\p{x_i^s \mid y_i^s}} \frac{\p{x_{i'}^{s'}, x_i^s}}{\p{x_i^s \mid y_i^s}}.
  \end{align*}

  On the other hand, since $h_s$ is a true hypotheses, we have $x_i^s \bot y_i^s$.
  \begin{align*}
    \p{x_i^s, y_i^s} & = \p{x_i^s}\p{y_i^s} \\
    \text{Hence, }\quad  \p{x_{i'}^{s'}, y_{i'}^{s'} \mid x_i^s, y_i^s}
 & = \frac{\p{x_{i'}^{s'}, y_{i'}^{s'}}\p{x_{i'}^{s'}, x_i^s}}{\p{x_{i'}^{s'} \mid x_i^s}\p{x_i^s}\p{x_i^s}} \\
 & = \frac{\p{x_{i'}^{s'}, y_{i'}^{s'}}}{\p{x_i^s}}.
  \end{align*}

  Thus, $x_{i'}^{s'}, y_{i'}^{s'} \bot y_i^s \mid x_i^s$. Moreover, Fisher's exact test assumes that $|\{i: x_i^s = 1\}| = n_s$ for any dataset $D = \{t_i, y_i\} _{i=1}^{n_D} \sim \mathcal{D}$. We have
  \begin{align*}
 &   & \{x_{i'}^{s'}, y_{i'}^{s'}\} _{i \in I_s} \bot \{x_i^s, y_i^s\} _{i \in I_s} \mid n_s, n_{s'}, n_D \\
 & \implies & a_s \bot a_{s'} \mid n_s, n_{s'}, n_D                                                        \\
 & \implies & \{p_h\} _{h \in T} \bot \{p_h\} _{h \in F} \mid n_s, n_{s'}, n_D.
  \end{align*}

  This concludes our proof.
\end{proof}

\mysubsection{Proof of Lemma\ref{lem:sigmaincrease}}
\begin{proof}
    First, we remind that $\sigma_t$ and $\sigma_{t+1}$ are obtained as follows
    \begin{align*}
      \sigma_t     & = \max \{\sigma: (\sigma - p_{t-1}^r)|\kappa_t(\sigma)| \leq \delta_t\} \\
      \sigma_{t+1} & = \max \{\sigma: (\sigma - p_t^r)|\kappa_{t+1}(\sigma)| \leq \delta_{t+1}\}.
    \end{align*}

    Because $H_{t+1} \subset H_t$ and $|\kappa_t(\sigma_t)| \leq \frac{\delta_t}{\sigma_t - p_{t-1}^r}$, we have
    \begin{align*}
                 & |\kappa_{t+1}(\sigma_t)| < |\kappa_t(\sigma_t)| \leq \frac{\delta_t}{\sigma_t - p_{t-1}^r} \\
      \implies \quad & |\kappa_{t+1}(\sigma_t)|(\sigma_t - p_t^r) < \delta_t \frac{\sigma_t - p_t^r}{\sigma_t - p_{t-1}^r}.
    \end{align*}

    On the other hand,
    \begin{align*}
      \delta_{t+1} = \delta_t - \tau_t(p_t^r - p_{t-1}^r) + p_t^r = \delta_t \frac{\sigma_t - p_t^r}{\sigma_t - p_{t-1}^r} + p_t^r.
    \end{align*}

    Thus, $|\kappa_{t+1}(\sigma_t)|(\sigma_t - p_t^r) < \delta_{t+1} - p_t^r \leq \delta_{t+1}$ and  $\sigma_{t+1} \geq \sigma_t$ due to the maximal operation.
\end{proof}

\mysubsection{Proof of Theorem~\ref{theorem:lamp}}
\begin{proof}
  Let $R$ be the rejected set by SPUR.\@ Moreover, let $\sigma_\mathrm{Tarone}$ and $R_\mathrm{Tarone}$ be the rejection threshold and the rejected set by T-Bonferroni, respectively. We have that
  \begin{align*}
    \sigma_\mathrm{Tarone}   & = \max\{\sigma \mid \sigma\kappa(\sigma) \leq \alpha\} \\
    R_{\mathrm{Tarone}} & = \{p_h \leq \sigma_\mathrm{Tarone}\} _{h \in H}.
  \end{align*}

  Letting $h \in R_\mathrm{Tarone}^c = H \setminus R_\mathrm{Tarone}$, the following holds for a hypothesis $h^* \in dom(R_\mathrm{Tarone})$:
  \begin{align*}
                 & \nexists_{h \in R_\mathrm{Tarone}}, h \succeq h^*         & \quad (\text{since } h^* \in dom(R_\mathrm{Tarone})) \\
    \text{ and } & \nexists_{h \in R_\mathrm{Tarone}^c}, p_h < p_{h^*} & \quad (\text{since } h^* \in R_\mathrm{Tarone}).
  \end{align*}

  Since $R_\mathrm{Tarone}^c \cup R_\mathrm{Tarone} = H$, $\nexists_{h \in H} h \succeq h^* \land p_h < p_{h^*}$

  On the other hand, $\exists_{h \in H} h \succeq h^* \land p_h < p_{h^*}$ is the necessary condition for the hypothesis $h^*$ to be ignored in the procedure SPUR.\@ Thus, no hypotheses in $dom(R_\mathrm{Tarone})$ will be ignored by SPUR.\@ Moreover, we have $\sigma_1 = \sigma_\mathrm{Tarone}$ since $H_1 = H$ and $p_0 = 0$ and $\sigma_k \geq \sigma_1 \geq \sigma_\mathrm{Tarone}$ for any iteration $k$ by Lemma~\ref{lem:sigmaincrease}. Thus, $p_k^* \leq \sigma_\mathrm{Tarone} \leq \sigma_k$ and it implies that $dom(R_\mathrm{Tarone}) \subseteq R$. This concludes our proof.
\end{proof}

\mysection{Appendix: Reproducibility}\label{data}
\begin{table}[h]
  \centering
  \caption{Adopted variables for each dataset and the number of categorizing levels for each variable (in parentheses).}\label{tab:variable-description}
  \vspace{-0.5em}
  \resizebox{1\linewidth}{!}{%
    \begin{tabular}{llll}
      \toprule
                         & family-vars       & utility-vars          & target-class                  \\
      \midrule \midrule
      Crash\cite{data-crash} & street-type ($5$) & traffic-control ($2$) & crash-type ($2$)              \\
                         & time ($24$)       & speed-limit ($7$)     & $y=1$ (injured)               \\
                         &                   & weather ($3$)         &                               \\
                         &                   & light-condition ($3$) &                               \\
      \midrule
      Adult\cite{data-adult} & sex ($2$)         & hours-per-week ($6$)  & income ($2$)                  \\
                         & work-class ($3$)  & education ($10$)      & $y=1$  ($>50$K)               \\
                         & occupation ($15$) &                       &                               \\
      \midrule
      Crime\cite{data-crime} & place ($12$)      & time ($12$)           & crime-type ($3$)              \\
                         & street ($14$)     &                       & $y =$ property/person/society \\
      \bottomrule
    \end{tabular}}
  \vspace{-1em}
\end{table}

Other details and the code of our experiment are available at https://github.com/dizzyvn/SPUR/

\end{document}